\documentclass[11pt]{article}
\usepackage{fullpage}
\usepackage[top=2cm, bottom=4.5cm, left=2cm, right=2cm]{geometry}
\usepackage{amsmath,amsthm,amsfonts,amssymb,amscd}
\usepackage{lastpage}
\usepackage{enumerate}
\usepackage{fancyhdr}
\usepackage{mathrsfs}
\usepackage{xcolor}
\usepackage{graphicx}
\usepackage{listings}
\usepackage[colorlinks=true, urlcolor=blue, linkcolor=blue, citecolor=magenta]{hyperref}
\usepackage{cite}
\usepackage{multirow}
\usepackage{comment}
\usepackage{mathpazo}		
\usepackage{todonotes}
\usepackage{xspace}

\hypersetup{%
  colorlinks=true,
  linkcolor=red,
  linkbordercolor={0 0 1}
}

\setlength{\parindent}{2em}
\setlength{\parskip}{0.3em}
\usepackage{setspace}
\setstretch{1.05}

\newtheorem{thm}{Theorem}[section]
\newtheorem{defn}[thm]{Definition}

\newtheorem{lema}[thm]{Lemma}
\newtheorem{hypo}[thm]{Hypothesis}

\title{A Survey on Parameterized Inapproximability: {\sc $k$-Clique}, {\sc $k$-SetCover}, and More}
\author{Xuandi Ren \\ Peking University \\ \texttt{renxuandi@pku.edu.cn}}

\begin{document}

\maketitle
\abstract{
In the past a few years, many interesting inapproximability results have been obtained from the parameterized perspective. This article surveys some of such results, with a focus on {\sc $k$-Clique}, {\sc $k$-SetCover}, and other related problems.
}

\section{Introduction}

Parameterization and Approximation are two natural ways to cope with {\sf NP}-complete optimization problems. For {\sc Clique} and {\sc SetCover}, two very basic {\sf NP}-complete problems whose parameterized version {\sc $k$-Clique} and {\sc $k$-SetCover} are also complete problems of {\sf W[1]} and {\sf W[2]}, both approximation and parameterization have been studied extensively. However, the combining of parameterization and approximation remains unexplored until recent years.

In their breakthrough work, \cite{CCK+17} showed very strong inapproximability results for {\sc $k$-Clique} and {\sc $k$-SetCover} under the hypothesis {\sf Gap-ETH}. However, although maybe plausible, {\sf Gap-ETH} is such a strong hypothesis that it already gives a gap in hardness of approximation. Thus it is still of great interest to prove the same lower bound under a gap-free assumption like {\sf ETH}, {\sf W[1]$\neq$FPT} and so on. Although these years have witnessed many significant developments along this way, the inapproximability of {\sc $k$-Clique} and {\sc $k$-SetCover} under gap-free hypotheses is still far beyond settled.

This article surveys some recent results, all of which are beautiful, full of smart ideas, and involve delicate algebraic or combinatorial tools. We hope to extract the relationship between different problems, capture the essence of successful attempts, and convey the ideas inside those results to readers.

\subsection{Organization of the Survey}
This article is organized by the problems. In Section \ref{sec:pre}, we put some preliminaries, including the definition of problems and hypotheses. In Section \ref{sec:maxcover_minlabel}, we introduce {\sc MaxCover} and {\sc MinLabel}, two problems which are not only important intermediate problems in proving parameterized inapproximability, but also of great interest themselves. In Section \ref{sec:kclique} and \ref{sec:ksetCover}, we introduce recent parameterized inapproximability results of {\sc $k$-Clique} and {\sc $k$-SetCover}, respectively.

\clearpage
\section{Preliminaries}\label{sec:pre}

In this section, we first introduce some concepts in FPT approximation, then briefly describe the problems discussed in this article, and the hypotheses which the results are based on.

\subsection{FPT Approximation}
For a parameterized optimization problem, we use $n$ to denote the input size, and the parameter $k$ usually refers to the number of elements we need to pick to obtain an optimal solution. In some problems $k$ is just the optimal solution size (e.g. {\sc $k$-Clique}), while in other problems it is not (e.g. {\sc One-Sided $k$-Biclique}). By enumerating the $k$ elements in the solution, the brute-force algorithm usually runs in time $O(n^k)$.

An algorithm for a maximization (respectively, minimization) problem is called \textit{$\rho$-FPT-approximation} if it runs in $f(k)n^{O(1)}$ time for some computable function $f$, and outputs a solution of size at least $k/\rho$ (respectively, at most $k\cdot\rho$). Here $\rho$ is called \textit{approximation ratio}. If an optimization problem admits no $f(k)$-FPT-approximation for any computable function $f$, we say this problem is \textit{totally FPT inapproximable}.

Note that since computing a constant-size solution is trivial, for a maximization problem, we only care about $o(k)$-FPT-approximation and the approximation ratio is measured in terms of only $k$. However, for a minimization problem, any computable approximation ratio is non-trivial, so if totally FPT inapproximability is already established, we can also discuss approximation ratio in terms of both $k$ and $n$.

\subsection{Problems}
If the input is divided into $k$ groups, and one is only allowed to pick at most one element from each group, we say this problem is \textit{colored}, otherwise it is \textit{uncolored}. For some problems (e.g. {\sc $k$-Clique}), the two versions are equivalent, while for some other problems (e.g. {\sc $k$-Biclique}) they are not equivalent at all. We will discuss the coloring in each problem's section separately.

Now we list the problems considered in this article. There are some other problems (e.g. {\sc MaxCover}) which are used as intermediate problems in proving hardness of approximation. We put their definitions in their separate sections since they are  a bit more complicated.

\begin{itemize}
	\item {\sc 3SAT}. The input is a 3-CNF formula $\varphi$ with $m$ clauses on $n$ variables. The goal is to decide whether there is a satisfying assignment for $\varphi$.
	\item {\sc $k$-Clique}. The input is an undirected graph $G=(V,E)$ with $n$ vertices. The goal is to decide whether there is a clique of size $k$. 
	\item {\sc Densest $k$-Subgraph}. The input is an undirected graph $G=(V,E)$ with $n$ vertices, and the goal is to find the maximum number of edges induced by $k$ vertices. 
	\item {\sc $k$-SetCover}. The input is a collection of $n$ sets $\mathcal S=\{S_1,\ldots,S_n\}$ over universe $U$. The goal is to decide whether there are $k$ sets in $\mathcal S$, whose union is $U$. 
\end{itemize}

\subsection{Hypotheses}

Here we list the hypotheses which the results are based on.

{\sf W[1]$\ne$FPT} and {\sf W[2]$\ne$FPT} are arguably the most natural hypotheses in parameterized complexity, and are often used to derive FPT time lower bounds. Since {\sc $k$-Clique} and {\sc $k$-SetCover} are complete problems of {\sf W[1]} and {\sf W[2]}, respectively, we directly use their intractability results in the statements of those two hypotheses here, and omit the definition of {\sf W}-Hierarchy.

\begin{hypo}[{\sf W[1]$\ne$FPT}]
	{\sc $k$-Clique} cannot be solved in $f(k)n^{O(1)}$ time, for any computable function $f$.
\end{hypo}

\begin{hypo}[{\sf W[2]$\ne$FPT}]
	{\sc $k$-SetCover} cannot be solved in $f(k)n^{O(1)}$ time, for any computable function $f$.
\end{hypo}

Tighter time lower bounds like $n^{\Omega(k)}$ often involves the \textit{Exponential Time Hypothesis} ({\sf ETH}). 

\begin{hypo}[Exponential Time Hypothesis ({\sf ETH})\cite{IP01,IPZ01,Tov84}]
	{\sc 3SAT} cannot be solved deterministically in $2^{o(n)}$ time, where $n$ is the number of variables. Moreover, this holds even when restricted to formulae in which $m=O(n)$, and each variable appears in at most three clauses.
\end{hypo}

There are two stronger assumptions on the intractability of {\sc 3SAT}, namely, the \textit{Gap Exponential Time hypothesis} ({\sf Gap-ETH}) and \textit{Strong Exponential Time hypothesis} ({\sf SETH}). {\sf Gap-ETH} is useful in proving strong inapproximability results for many parameterized problems, while {\sf SETH} is used to show tight time lower bounds like $n^{k-o(1)}$.

\begin{hypo}[Gap Exponential Time Hypothesis ({\sf Gap-ETH}) \cite{Din16,MR16}]
	For some constant $\varepsilon>0$, there is no deterministic algorithm which runs in $2^{o(n)}$ time can, given a {\sc 3SAT} formula on $n$ variables and $m=O(n)$ clauses,  distinguish between the following two cases:
	\begin{itemize}
		\item (Completeness) the formula is satisfiable.
		\item (Soundness) any assignment violates more than $\varepsilon$ fraction of clauses.
	\end{itemize}
\end{hypo}

Note that by current state-of-the-art PCP theorem, a {\sc 3SAT} formula $\varphi$ on $n$ variables can be transformed into a constant gap {\sc 3SAT} formula $\varphi'$ on only $n \text{polylog}(n)$ variables \cite{Din07}. Therefore, assuming ETH, constant gap {\sc 3SAT} cannot be solved in $2^{o(n/\text{polylog}(n))}$ time. A big open problem is whether linear-sized PCP exists. If so, {\sf Gap-ETH} would follow from {\sf ETH}.

\begin{hypo}[Strong Exponential Time Hypothesis ({\sf SETH}) \cite{IP01, IPZ01}]
	For any $\varepsilon>0$, there is an integer $k \ge 3$ such that no algorithm can solve {\sc $k$SAT} deterministically in $2^{(1-\varepsilon)n}$ time.
\end{hypo}

{\sf Gap-ETH} and {\sf SETH} both imply {\sf ETH}. However, no formal relationship between them is known now.

There are also randomized versions of {\sf ETH}, {\sf Gap-ETH} and {\sf SETH}, which also rule out randomized algorithms running in corresponding time. We do not separately list them here. 

One last important hypothesis is the \textit{Parameterized Inapproximability  Hypothesis} ({\sf PIH}). 

\begin{hypo}[Parameterized Inapproximability  Hypothesis ({\sf PIH}) \cite{LRSZ20}]
	For some constant $\varepsilon>0$, there is no $(1+\varepsilon)$ factor FPT approximation algorithm for {\sc Colored Densest $k$-Subgraph}.
\end{hypo}

The factor $(1+\varepsilon)$ can be replaced by any constant and is not important.

Note that if for a graph the number of edges induced by $k$ vertices is only $\binom{\varepsilon k}{2}\approx \varepsilon^2 \binom{k}{2}$, it can not have a clique of size $>\varepsilon k$. Thus, {\sf PIH} implies {\sc $k$-Clique} is hard to approximate within any constant factor in FPT time. However, the reverse direction is not necessarily true (forbiddinng small clique does not imply low density of edges), and it remains an important open question that whether {\sf PIH} holds if we assume {\sc $k$-Clique} is FPT inapproximable within any constant factor.

Another remark is that {\sf PIH} can be implied from {\sf Gap-ETH}. See Appendix A of \cite{BGKM18} for a simple proof. However, deriving {\sf PIH} from a gap-free hypothesis such as {\sf ETH} is still open.

\clearpage
\section{{\sc MaxCover} and {\sc MinLabel}}\label{sec:maxcover_minlabel}
In this section, we introduce two intermediate problems which are important in proving hardness of {\sc $k$-Clique} and {\sc $k$-SetCover}.

The input is the same for both problems. It is a bipartite graph $G=(U \dot\cup W, E)$, such that $U$ is partitioned into $U=U_1 \dot\cup\ldots \dot\cup U_\ell$ and $W$ is partitioned into $W=W_1\dot\cup \ldots \dot\cup W_h$. We refer to $U_i$'s and $W_j$'s as \textit{left super-nodes} and \textit{right super-nodes}, respectively, and we refer to the maximum size of left super-nodes and right super-nodes as \textit{left alphabet size} and \textit{right alphabet size}, and denote them as $|\Sigma_U|$ and $|\Sigma_W|$, respectively.

We say a {\sc MaxCover} or {\sc MinLabel} instance has \textit{projection property} if for every $i \in [\ell], j \in [h]$, one of the following holds:
\begin{itemize}
	\item Every $u \in U_i$ has exactly one neighbor $w \in W_j$.
	\item There is a full bipartite graph between $U_i$ and $W_j$.
\end{itemize}

The bipartite case just means there are no restrictions between $U_i$ and $W_j$.

Another interesting property is called \textit{pseudo projection property}, which is almost the same as projection property, except the projection direction in the first case is opposite (Every $w \in W_j$ has exactly one neighbor $u \in U_i$). 

For convenience, in an instance $\Gamma$ and for a left super-node $U_i, i \in [\ell]$, we sometimes refer to the number of right super-nodes $W_j$'s, such that the edges between $U_i$ and $W_j$ do not form a full bipartite graph, as $U_i$'s \textit{degree}. Similarly define it for every right super-nodes. We call the maximum degree over all $U_i$'s (respectively, over all $W_i$'s), the \textit{left degree} (respectively, \textit{right degree}) of $\Gamma$.

A solution to {\sc MaxCover} is a subset of vertices $S \subseteq W$ formed by picking a vertex from each $W_j$ (i.e. $|S \cap W_j|=1$ for all $j \in [h]$). We say a labeling $S$ covers a left super-node $U_i$ if there exists a vertex $u_i \in U_i$ which is a common neighbor of all vertices in $S$. The goal in {\sc MaxCover} is to find a labeling that covers the maximum fraction of left super-nodes. The value of a {\sc MaxCover} instance is defined as
$$\frac{1}{\ell}\left(\max_{\text{labeling}~S}|\{i \in [\ell] |U_i\text{ is covered by }S\}|\right)$$

A solution to {\sc MinLabel} is also a subset of vertices $S \subseteq W$, but not necessarily one vertex from each $W_j$. We say a multi-labeling $S$ covers a left super-node $U_i$ if there exists a vertex $u_i \in U_i$ which has a neighbor in $S \cap W_j$ for every $j \in [h]$. The goal in {\sc MinLabel} is to find a minimum-size multi-labeling $S$ that covers all the left super-nodes. The value of a {\sc MinLabel} instance is defined as
$$\frac{1}{h}\left(\min_{\substack{
	\text{multi-labeling }S\\ \text{ which covers every } U_i}} |S|\right)$$

There is a remark on the relationship between projection property and pseudo projection property. If the degree of each left super-node $U_i$ is bounded by some constant (it is the case when reducing from {\sc 3SAT}, see Theorem \ref{thm:maxcover_under_gapeth}), then a {\sc MaxCover} instance with projection property can be reduced to a {\sc MaxCover} instance with pseudo projection property, with a constant shrinking of the gap.

\begin{thm}
	There is a reduction which, on input a {\sc MaxCover} instance  $\Gamma=(\bigcup_{i \in [\ell]} U_i,\bigcup_{i \in [h]} W_i, E)$ with projection property, and the left degree of $\Gamma$ is bounded by a constant $q$, outputs a {\sc MaxCover} instance $\Gamma'=(\bigcup_{i \in [\ell \cdot q]}  U'_i, \bigcup_{i \in [h+\ell]}W'_i, E')$ with pseudo projection property, such that
	\begin{itemize}
		\item (Completeness) If {\sc MaxCover}$(\Gamma)=1$, then {\sc MaxCover}$(\Gamma')=1$.
		\item (Soundness) If {\sc MaxCover}$(\Gamma)<1-\varepsilon$, then {\sc MaxCover}$(\Gamma')<1-\varepsilon/q$.
	\end{itemize}
	and the right degree of $\Gamma'$ is bounded by $q$.
\end{thm}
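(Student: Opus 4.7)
The plan is to introduce, for each original left super-node $U_i$, a fresh right super-node $W'_{h+i}$ playing the role of a ``witness variable'' whose intended value is the chosen $u\in U_i$, and to split the $q$-ary constraint carried by $U_i$ into $q$ binary constraints, one per projection $\pi_{i,t}:U_i\to W_{j_{i,t}}$ for $t\in[q]$ (padding by dummy projections if $U_i$'s degree is smaller than $q$). Each such binary constraint becomes one of the $\ell\cdot q$ new left super-nodes $U'_{i,t}$. Concretely, set $W'_j:=W_j$ for $j\in[h]$, let $W'_{h+i}$ contain a vertex $\tilde u$ for every $u\in U_i$, and let $U'_{i,t}$ contain a vertex $w$ for every $w\in W_{j_{i,t}}$. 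The edges of $\Gamma'$ are: (i) between $U'_{i,t}$ and $W_{j_{i,t}}$, the identity matching $w\sim w$; (ii) between $U'_{i,t}$ and $W'_{h+i}$, an edge $w\sim\tilde u$ iff $\pi_{i,t}(u)=w$; (iii) every other bipartite pair is complete. Since (i) is an identity and in (ii) each $\tilde u$ has the unique neighbor $\pi_{i,t}(u)\in U'_{i,t}$, both non-trivial edge sets satisfy pseudo projection.

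For completeness, given a labeling $s$ of $\Gamma$ that covers every $U_i$, pick for each $i$ a witness $u_i^\ast\in U_i$ and extend to $\Gamma'$ by $s'_j:=s_j$ and $s'_{h+i}:=\tilde u_i^\ast$. For every $(i,t)$ the vertex $s_{j_{i,t}}\in U'_{i,t}$ is adjacent to $s'_{j_{i,t}}$ by the identity matching, and to $s'_{h+i}$ because $\pi_{i,t}(u_i^\ast)=s_{j_{i,t}}$, so $U'_{i,t}$ is covered. For soundness I would argue contrapositively: let $s'$ cover at least a $1-\varepsilon/q$ fraction of the $\ell q$ super-nodes $U'_{i,t}$, set $s_j:=s'_j$, and let $v_i\in U_i$ be the unique $u$ with $s'_{h+i}=\tilde u$. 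If all $q$ constraints $U'_{i,1},\ldots,U'_{i,q}$ are covered, then the covering witness $w^{(t)}\in U'_{i,t}$ must equal $s_{j_{i,t}}$ by (i) and must satisfy $\pi_{i,t}(v_i)=w^{(t)}$ by (ii), so $\pi_{i,t}(v_i)=s_{j_{i,t}}$ for every $t$ and $v_i$ witnesses that $U_i$ is covered in $\Gamma$. Thus each $i$ whose $U_i$ is uncovered contributes at least one uncovered $U'_{i,t}$ with the same first coordinate, giving an injection from uncovered $U_i$'s into uncovered $U'_{i,t}$'s. Hence the number of uncovered $U_i$ is at most the number of uncovered $U'_{i,t}$, which is at most $\varepsilon\ell$, so {\sc MaxCover}$(\Gamma)\ge 1-\varepsilon$.

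For the degree bound, each $W'_{h+i}$ is non-trivially adjacent exactly to the $q$ super-nodes $\{U'_{i,t}\}_{t\in[q]}$, and each original $W_j$ is non-trivially adjacent exactly to those $U'_{i,t}$ with $j_{i,t}=j$, a count equal to the right degree of $W_j$ in $\Gamma$ (which is at most $q$ in the intended regime, e.g.\ when $\Gamma$ arises from 3SAT with bounded occurrences as in Theorem~\ref{thm:maxcover_under_gapeth}). The step I expect to be most delicate is not the pigeonhole in soundness but the choice of ``witness encoding'': indexing $U'_{i,t}$ by $W_{j_{i,t}}$ rather than by $U_i$ is what makes edge set (i) an honest identity and routes the dependence on $u$ through the new super-node $W'_{h+i}$, giving pseudo projection on both incident sides; the naive indexing by $U_i$ would break pseudo projection on the $W_{j_{i,t}}$ side whenever $\pi_{i,t}$ is not a bijection.
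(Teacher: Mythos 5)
Your construction is exactly the one the paper sketches: one new left super-node (a copy of $W_{j}$) per non-trivial constraint between some $U_i$ and $W_j$, the juxtaposition of $U$ and $W$ as the new right super-nodes, identity/projection edges on the two relevant pairs and complete bipartite graphs elsewhere; your completeness and pigeonhole soundness arguments correctly fill in the details the paper omits. You even rightly observe that the claimed bound of $q$ on the right degree of $\Gamma'$ at the original $W_j$'s really uses that the right degree of $\Gamma$ is also bounded (as it is in the intended application from bounded-occurrence {\sc 3SAT}), a point the paper's statement glosses over.
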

\begin{proof}
	The reduction is straightforward: for each restriction between some $U_i, i \in [\ell]$ and $W_j, j \in [h]$, build a copy of $W_j$ on the left (there are at most $\ell \cdot q$ restrictions, thus that many copies), and the new right super-nodes are just the juxtaposition of $U$ and $W$. Each left super-node is only responsible to check one restriction in $\Gamma$. The edges between the left and the right $W_i$ parts form either a bijection or a full bipartite graph, while the edges between the left and the right $U_i$ parts form either an injection from $U$ to $W$ (as they do in $\Gamma$), or a full bipartite graph, too. It's easy to see the new instance satisfies pseudo projection property, the right degree is $\le q$, and the gap is only hurt by a factor of $q$.
\end{proof}

In the following we briefly list the inapproximability results for {\sc MaxCover} and {\sc MinLabel}, which will be introduced detailedly in subsequent subsections. We use $n$ to denote $|\Sigma|$ for simplicity.

\begin{center}
\begin{tabular}{c|c|c|c|c|c}
Problem & Assumption               & Ratio & Lower Bound           & Holds Even                                                                                               & Reference                      \\ \hline
\multirow{7}{*}{\sc MaxCover}& \multirow{4}{*}{\sf Gap-ETH} & Any Constant                   & $2^{\Omega(h)}$            & \begin{tabular}[c]{@{}c@{}}$|\Sigma_U|=|\Sigma_W|=O(1)$\\ $\ell=O(h)$\\ projection property\end{tabular} & /                              \\ \cline{3-6} & 
                         & $r/\ell$                       & $n^{\Omega(r)}$     & \begin{tabular}[c]{@{}c@{}}$|\Sigma_W|=O(1)$\\ projection property\end{tabular}                          & \multirow{3}{*}{\cite{CCK+17}} \\ \cline{3-5} 
                   &      & \multirow{2}{*}{$\gamma$}                       & \multirow{2}{*}{$n^{\Omega(h)}$}     & \multirow{2}{*}{$|\Sigma_U| \le (1/\gamma)^{O(1)}$}                                                                       & \\ & & & & \\ \cline{2-6}
& {\sf W[1]$\neq$FPT}
          & $n^{-O(\frac{1}{\sqrt h})}$    & $f(h)\cdot \text{poly}(n)$ & /                                                                                                        &     \multirow{2}{*}{\cite{KL21}}                           \\ \cline{2-5}
& {\sf ETH}                      & $n^{-O(\frac{1}{h^3})}$        & $n^{\Omega(h)}$            & /                                                                                                        &                                \\ \hline  \multirow{2}{*}{\sc MinLabel}& \multirow{2}{*}{\sf Gap-ETH} & \multirow{2}{*}{$\gamma^{-1/h}$} & \multirow{2}{*}{$n^{\Omega(h)}$} & \multirow{2}{*}{$|\Sigma_U| \le (1/\gamma)^{O(1)}$}& \multirow{2}{*}{\cite{CCK+17}} \\ 
& &  & & & \\ \hline
\end{tabular}
\end{center}

It's worth noting that \cite{KLM19} also showed some inapproximability results for \textsc{MaxCover}. However, as their parameters are specifically designed for later use of proving hardness of {\sc $k$-SetCover}, we defer their results to Section \ref{DPCPF} instead of here. 
	
\subsection{Hardness Results Based on  {\sf Gap-ETH}}
Results in this subsection are from \cite{CCK+17}.

A {\sc 3SAT} instance can be formulated as {\sc MaxCover} instance as follows. Each left super-node consists of 7 vertices, which represent the satisfying assignments of a clause. Each right super-node consists of 2 vertices, which correspond to the true/false assignment for a variable. Two vertices are linked if and only if the assignments are consistent. Therefore, {\sf Gap-ETH} can be also restated as an intractability result of constant gap {\sc MaxCover}:

\begin{thm}\label{thm:maxcover_under_gapeth}
	Assuming {\sf Gap-ETH}, there is a constant $\varepsilon>0$ such that no deterministic algorithm can distinguish between the following cases for an instance $\Gamma =(\cup_{i\in[\ell]} U_i,\cup_{i \in [h]} W_i, E)$ in $2^{o(h)}$ time:
	\begin{itemize}
		\item (Completeness) {\sc MaxCover$(\Gamma)=1$}.
		\item (Soundness) {\sc MaxCover$(\Gamma)<1-\varepsilon$}.
	\end{itemize}
	Moreover, this holds even when $|\Sigma_U|,|\Sigma_W|=O(1), \ell=\Theta(h)$ and $\Gamma$ has projection property.
\end{thm}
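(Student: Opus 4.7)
The plan is to reduce the gap version of \textsc{3SAT} promised by {\sf Gap-ETH} directly to constant-gap {\sc MaxCover}, using the folklore clause-variable bipartite construction already sketched in the paragraph preceding the statement. Given a 3CNF formula $\varphi$ on $n$ variables and $m=O(n)$ clauses (with each variable in at most three clauses), I would build the {\sc MaxCover} instance $\Gamma$ with $h=n$ right super-nodes $W_1,\ldots,W_n$---one per variable, with $W_j$ containing the two vertices ``$x_j=0$'' and ``$x_j=1$''---and $\ell=m$ left super-nodes $U_1,\ldots,U_m$---one per clause, with $U_i$ containing the seven partial assignments to $C_i$'s three variables that satisfy $C_i$. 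For each pair $(U_i, W_j)$: if $x_j$ occurs in $C_i$, put an edge from $u \in U_i$ to $w \in W_j$ iff $u$ and $w$ agree on $x_j$; otherwise make the bipartite graph between $U_i$ and $W_j$ complete.

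Next I would verify the structural parameters. Since every $u \in U_i$ has a unique neighbor in those $W_j$'s with $x_j \in C_i$, and is connected to all of $W_j$ otherwise, $\Gamma$ satisfies the projection property by construction. The alphabet sizes are $|\Sigma_U|=7$ and $|\Sigma_W|=2$, both $O(1)$, and $\ell = m = \Theta(n) = \Theta(h)$. A labeling $S$ selects one truth value per variable, which I identify with an assignment $\alpha\colon\{x_1,\ldots,x_n\}\to\{0,1\}$, and $U_i$ is covered by $S$ iff $\alpha$ restricted to the variables of $C_i$ is one of the seven satisfying patterns---equivalently, iff $\alpha$ satisfies $C_i$. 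Hence the fraction of covered left super-nodes equals the fraction of clauses satisfied by $\alpha$.

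Completeness and soundness then follow immediately: a satisfying assignment yields a labeling covering all $U_i$'s, and any labeling violating more than an $\varepsilon$ fraction of clauses leaves more than $\varepsilon\ell$ super-nodes uncovered. A hypothetical $2^{o(h)}=2^{o(n)}$ algorithm distinguishing the two cases for $\Gamma$ would therefore distinguish the two cases of the {\sf Gap-ETH} \textsc{3SAT} problem in $2^{o(n)}$ time, contradicting {\sf Gap-ETH}.

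There is essentially no hard step here; the reduction is linear-time and value-preserving, so the ``main obstacle'' is only to check that the projection direction matches the definition---namely, that it is each $u \in U_i$ (not each $w \in W_j$) that has a unique neighbor on the other side, which is exactly what happens because a satisfying pattern for $C_i$ pins down $x_j$ whenever $x_j$ appears in $C_i$. I would also note in passing that the construction gives $|\Sigma_U|,|\Sigma_W|=O(1)$ without any extra work, a property that later gap-amplification reductions (e.g.\ via products or dispersers, as in \cite{CCK+17}) will rely on crucially.
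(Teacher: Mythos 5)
Your reduction is exactly the clause--variable construction the paper itself sketches in the paragraph preceding the theorem (seven satisfying assignments per clause on the left, two truth values per variable on the right, consistency edges), and your verification of the projection property, the constant alphabet sizes, $\ell=\Theta(h)$, and the value-preservation is correct. The proposal matches the paper's approach; no issues.
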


Actually {\sf Gap-ETH} is equivalent to the above, see Appendix E of \cite{CCK+17}.

Note that {\sc MaxCover} can be solved in $O^*(|\Sigma_U|^{\ell})$ or $O^*(|\Sigma_W|^h)$ time, by just enumerating vertices picked from each left super-nodes or right super-nodes. Moreover, it can even decide whether the answer is $\ge \frac{r}{\ell}$ in $O^*(\binom{\ell}{r}|\Sigma_U|^r)=O^*((\ell |\Sigma_U|)^r)$ or $O^*(|\Sigma_W|^h)$ time. As shown below, these are the best possible assuming {\sf Gap-ETH}.
\begin{thm}[Theorem 4.2 in \cite{CCK+17}]\label{MaxCover_L}
	Assuming {\sf Gap-ETH}, there exists constants $\delta,\rho>0$ such that for any $\ell \ge r \ge \rho$, no algorithm can take a {\sc MaxCover} instance $\Gamma$ with $\ell$ left super-nodes, distinguish between the following cases in $O_{\ell,r}(|\Gamma|^{\delta r})$ time:
	\begin{itemize}
		\item (Completeness) {\sc MaxCover$(\Gamma)=1$}.
		\item (Soundness) {\sc MaxCover$(\Gamma)<\frac{r}{\ell}$}.
	\end{itemize}
	This holds even when $|\Sigma_W|=O(1)$ and $\Gamma$ has projection property.
\end{thm}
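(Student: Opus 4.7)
The plan is to start from the constant-gap {\sc MaxCover} instance of Theorem~\ref{thm:maxcover_under_gapeth} and apply a $t$-subset (product) construction that amplifies the gap from $(1,1-\varepsilon)$ to $(1,r/\ell)$ while keeping $|\Gamma|$ polynomial in the parameters; the desired lower bound will then come from pushing the Gap-ETH hardness of the starting instance through this reduction.

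First I would invoke Theorem~\ref{thm:maxcover_under_gapeth} to fix an instance $\Gamma_0=(\bigcup_{i\in[\ell_0]}U^0_i,\bigcup_{j\in[h_0]}W^0_j,E^0)$ with $\ell_0=\Theta(h_0)$, constant alphabets $|\Sigma_U|,|\Sigma_W|=O(1)$, projection property, and gap $(1,1-\varepsilon)$, for which Gap-ETH rules out $2^{o(h_0)}$-time distinguishers. Then, given targets $\ell\ge r\ge\rho$, I would set $t=\lceil\log_{1/(1-\varepsilon)}(\ell/r)\rceil$ so that $(1-\varepsilon)^t\le r/\ell$, and pick the Gap-ETH starting size so that $\binom{\ell_0}{t}\ge\ell$.

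The reduction itself is the natural $t$-product on the left side: enumerate $\ell$ distinct $t$-subsets $I\subseteq[\ell_0]$, create one left super-node $U_I$ per subset with alphabet $\prod_{i\in I}U^0_i$, keep the right side identical to $\Gamma_0$, and connect a tuple $(u_i)_{i\in I}\in U_I$ to $w\in W^0_j$ iff every component $u_i$ is adjacent to $w$ in $\Gamma_0$. The projection property is inherited coordinate-wise from $\Gamma_0$, and $|\Sigma_W|$ is unchanged. Completeness is immediate: any labeling $S$ that covers every $U^0_i$ in $\Gamma_0$ covers every $U_I$ in $\Gamma$ via the corresponding tuple of witnesses. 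For soundness, covering $U_I$ forces $S$ to cover each $U^0_i$ for $i\in I$ (by projection onto each coordinate), so in the soundness case the number of covered $U_I$'s is at most $\binom{(1-\varepsilon)\ell_0}{t}\le(1-\varepsilon)^t\binom{\ell_0}{t}$, giving {\sc MaxCover}$(\Gamma)\le(1-\varepsilon)^t\le r/\ell$.

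It remains to translate running times. Since $|\Sigma_U|$ is constant and $t=\Theta(\log(\ell/r))$, we get $|\Sigma_U|^t=\ell^{O(1)}$ and hence $|\Gamma|=O(\ell\cdot|\Sigma_U|^t\cdot h_0)=\ell^{O(1)}\cdot\mathrm{poly}(\ell_0)$. A hypothetical $O_{\ell,r}(|\Gamma|^{\delta r})$-time distinguisher for $\Gamma$ therefore yields, via this polynomial-time reduction, a $2^{O(\delta r\log\ell)}$-time distinguisher for $\Gamma_0$. The plan is then to choose $\delta$ a sufficiently small universal constant (depending only on $\varepsilon$ and on the implicit constants in the Stirling estimate $\binom{\ell_0}{t}\sim(e\ell_0/t)^t$ linking $\ell_0$ to $\ell$ and $t$) so that $2^{O(\delta r\log\ell)}=2^{o(h_0)}$ along the whole family, contradicting the Gap-ETH lower bound guaranteed by Theorem~\ref{thm:maxcover_under_gapeth}. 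The main obstacle is precisely this calibration step: ensuring that a single constant $\delta$ works uniformly in $\ell$ and $r\ge\rho$ forces one to carefully balance the amplification exponent $t$ against the instance blow-up it induces, and to choose $\rho$ large enough so that the low-$r$ regime, where $\binom{\ell_0}{t}$ is close to its maximum, is absorbed into the constants.
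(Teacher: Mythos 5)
There is a genuine gap, and it sits exactly where the paper needs its key combinatorial object. Your soundness step asserts that the number of covered $U_I$'s is at most $\binom{(1-\varepsilon)\ell_0}{t}\le(1-\varepsilon)^t\binom{\ell_0}{t}$, but that inequality counts the fraction of \emph{all} $t$-subsets contained in the covered set $C\subseteq[\ell_0]$, whereas your instance only contains $\ell$ arbitrarily chosen subsets. Since $\ell$ and $t$ are fixed while $\ell_0=\Theta(h_0)\to\infty$, your chosen subsets touch only $\le \ell t=O(1)$ of the original left super-nodes, and a NO instance of $\Gamma_0$ can perfectly well admit a labeling covering that particular constant-size collection (soundness only forbids covering more than a $(1-\varepsilon)$ \emph{fraction}). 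Such a labeling covers every $U_I$, so {\sc MaxCover}$(\Gamma)=1$ in the soundness case and the reduction proves nothing. Moreover, the flaw is not a matter of choosing the subsets more cleverly while keeping $t=\Theta(\log(\ell/r))$: the only way to derive a contradiction from ``$r$ of the $U_I$'s are covered'' is to conclude that $|I_{j_1}\cup\dots\cup I_{j_r}|\ge(1-\varepsilon)\ell_0$, which forces each subset to have size at least $(1-\varepsilon)\ell_0/r$. Logarithmic-size subsets are information-theoretically too small for this style of amplification.

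The paper's proof is your construction with the correct combinatorial core supplied: it uses an $(m,\ell,k,r,\varepsilon)$-disperser, i.e.\ $\ell$ subsets of $[m]$ of size $k=\lceil 3m/(\varepsilon r)\rceil$ such that \emph{any} $r$ of them union to at least $(1-\varepsilon)m$ elements (constructible from random $k$-subsets when $\ln\ell\le m/r$). Each new left super-node is the set of satisfying assignments of the AND of the $k$ clauses indexed by $I_j$ (which also keeps only consistent tuples, preserving the projection property --- your raw product $\prod_{i\in I}U_i^0$ would contain inconsistent tuples with no neighbor in some $W_j$, breaking the ``exactly one neighbor'' requirement). Covering $r$ new super-nodes then forces $(1-\varepsilon)m$ original clauses to be simultaneously satisfiable, so a NO instance has value $<r/\ell$. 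The price is the alphabet blow-up $|\Sigma_U|^k=2^{O(m/r)}$, which is precisely what makes the time calibration work: $|\Gamma|^{\delta r}=2^{O(\delta m)}$, contradicting {\sf Gap-ETH} for a sufficiently small universal $\delta$. Your calibration paragraph, by contrast, is solving for the wrong constraint because the blow-up in your construction is only polynomial.
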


The theorem is straightforward when $r=\Theta(\ell)$, because we can directly compress a constant number of left super-nodes into one. The interesting case is when $r$ is much smaller than $\ell$.

The proof involves a combinatorial object called \textit{disperser}, which is defined as follows.
\begin{defn}[Disperser\cite{CW89, Zuc96a, Zuc96b}]
	For positive integers $m,\ell, k, r\in \mathbb N$ and constant $\varepsilon\in(0,1)$, an $(m,\ell,k,r,\varepsilon)$-disperser is a collection $\mathcal I$ of $\ell$ subsets $I_1,\ldots,I_\ell \subseteq [m]$, each of size $k$, such that the union of any $r$ different subsets from the collection has size at least $(1-\varepsilon)m$.
\end{defn}

Dispersers with proper parameters can be constructed using random subsets with high probability.

\begin{lema}
	For positive integers $m,\ell,r \in \mathbb N$ and constant $\varepsilon \in (0,1)$, let $k=\lceil \frac{3m}{\varepsilon r}\rceil$ and let $I_1,\ldots,I_\ell$ be random $k$-subsets of $[m]$. If $\ln \ell \le \frac{m}{r}$ then $\mathcal I=\{I_1,\ldots,I_\ell\}$ is an $(m,\ell,k,r,\varepsilon)$-disperser with probability at least $1-e^{-m}$.
\end{lema}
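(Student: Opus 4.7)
The plan is to carry out a standard probabilistic existence argument: union-bound over all choices of $r$ subsets $I_{j_1},\dots,I_{j_r}$ from the collection and over all potential "covers" $T \subseteq [m]$ of size $\lfloor(1-\varepsilon)m\rfloor$ that could contain $I_{j_1}\cup\dots\cup I_{j_r}$. The bad event that $\mathcal I$ fails to be a disperser is exactly that some choice of $r$ indices yields a union missing at least $\varepsilon m$ elements, which happens iff some such $T$ contains all of $I_{j_1},\dots,I_{j_r}$.

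First I would fix one $r$-tuple of indices and one set $T$ with $|T|=\lfloor(1-\varepsilon)m\rfloor$. For a uniformly random $k$-subset $I$ of $[m]$, the probability that $I\subseteq T$ equals $\binom{|T|}{k}/\binom{m}{k}=\prod_{i=0}^{k-1}\frac{|T|-i}{m-i}\le (1-\varepsilon)^k\le e^{-\varepsilon k}$, using $(|T|-i)/(m-i)\le |T|/m\le 1-\varepsilon$ for each factor. By independence of the $I_j$'s, the probability that all $r$ chosen subsets lie inside $T$ is at most $e^{-\varepsilon k r}$. Plugging in $k\ge 3m/(\varepsilon r)$ gives $\varepsilon kr\ge 3m$, so this bound is $\le e^{-3m}$.

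Next I would take a union bound over the at most $\binom{\ell}{r}$ choices of $r$ indices and the at most $\binom{m}{\lfloor(1-\varepsilon)m\rfloor}\le 2^m$ choices of $T$. Using the hypothesis $\ln\ell\le m/r$, we have $\binom{\ell}{r}\le \ell^r\le e^{r\ln\ell}\le e^m$, and $2^m\le e^m$, so the total failure probability is bounded by
\[
e^{m}\cdot e^{m}\cdot e^{-3m}=e^{-m},
\]
which gives the claimed success probability $1-e^{-m}$.

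There is no real obstacle here beyond the bookkeeping of constants; the only thing to be slightly careful about is the estimate $\binom{|T|}{k}/\binom{m}{k}\le(1-\varepsilon)^k$, which needs the monotonicity of $(|T|-i)/(m-i)$ in $i$ (each factor is at most $|T|/m$), and matching the factor of $3$ in the definition of $k$ to simultaneously absorb the $\binom{\ell}{r}$ term and the $\binom{m}{(1-\varepsilon)m}$ term while leaving an $e^{-m}$ slack.
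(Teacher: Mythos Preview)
Your argument is correct and is exactly the standard probabilistic construction one would expect. The paper itself is a survey and does not supply a proof of this lemma; it merely states the result and remarks that the construction can be derandomized. So there is nothing in the paper to compare against beyond the bare statement, and your union-bound computation (bounding $\binom{|T|}{k}/\binom{m}{k}\le(1-\varepsilon)^k\le e^{-\varepsilon k}$, then absorbing $\binom{\ell}{r}\le e^m$ via $\ln\ell\le m/r$ and $\binom{m}{\lfloor(1-\varepsilon)m\rfloor}\le 2^m\le e^m$) matches the constants in the lemma precisely.

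One tiny remark: your phrase ``missing at least $\varepsilon m$ elements'' technically describes a slightly larger event than the negation of the disperser property (which is ``union $<(1-\varepsilon)m$'', i.e.\ missing strictly more than $\varepsilon m$ elements). This is harmless since you are upper-bounding the failure probability, but you may want to phrase it as bounding a superset of the bad event. Also, when $k>|T|$ the ratio $\binom{|T|}{k}/\binom{m}{k}$ is zero and the bound holds vacuously, so no separate case analysis is needed.
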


This above construction can also be derandomized easily. With this tool, we can compress the left super-nodes in a {\sc MaxCover} instance according to those subsets. Each left super-node now corresponds to satisfying assignments of the AND of $k$ clauses. If there is a labeling that covers at least $r$ left super-nodes, from the definition of disperser we know that at least $(1-\varepsilon) m$ clauses in the original {\sc 3SAT} instance can be simultaneously satisfied. The size of the new instance is $2^{O(k)}=2^{O(m/r)}$, thus an algorithm for {\sc MaxCover} which runs in $|\Gamma|^{o(r)}$ time would lead to an algorithm for constant gap {\sc 3SAT} in $2^{o(m)}$ time, refuting {\sf Gap-ETH}. 

In the other direction, we would like to rule out $|\Gamma|^{o(h)}$ algorithms for approximating {\sc MaxCover}, where $h$ is the number of right super-nodes. We have the following theorem:

\begin{thm}[Theorem 4.3 in \cite{CCK+17}] \label{MaxCover_R}
Assuming {\sf Gap-ETH}, there exists constants $\delta,\rho>0$ such that for any $h \ge \rho$ and $1 \ge \gamma >0$, no algorithm can take a {\sc MaxCover} instance $\Gamma$ with $h$ right super-nodes, distinguish between the following cases in $O_{h,\gamma}(|\Gamma|^{\delta h})$ time:
\begin{itemize}
	\item (Completeness) {\sc MaxCover$(\Gamma)=1$}.
	\item (Soundness) {\sc MaxCover$(\Gamma)<\gamma$}.
\end{itemize}
This holds even when $|\Sigma_U|\le (1/\gamma)^{O(1)}$.
\end{thm}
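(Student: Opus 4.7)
My plan is to adapt the disperser-based compression from Theorem~\ref{MaxCover_L} so that hardness is parameterized by the right-side count $h$, and then to bundle the $\Theta(n)$ original right super-nodes into $h$ large buckets. I would start from the Gap-ETH MaxCover instance of Theorem~\ref{thm:maxcover_under_gapeth}, with $\ell_0,h_0=\Theta(n)$, constant alphabets, projection property, and soundness gap $1-\varepsilon$, then apply an $(m,\ell,k,r,\varepsilon/2)$-disperser on the $m$ original clauses. Each new left super-node encodes the joint satisfying assignments of a conjunction of $k$ clauses, so $|\Sigma_U|\le 2^{O(k)}$; choosing $k=\Theta(\log(1/\gamma))$ makes $|\Sigma_U|\le(1/\gamma)^{O(1)}$ as required. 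The disperser property guarantees that covering $\ge r$ new left super-nodes would force a $(1-\varepsilon/2)$-fraction of the original clauses to be simultaneously satisfied, contradicting Gap-ETH soundness; choosing $\ell=r/\gamma$ (within the feasibility bound $\ell\le e^{m/r}$) therefore forces the MaxCover gap down to $\gamma$.

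Next I would collapse the $h_0$ remaining right super-nodes into $h$ buckets of size $h_0/h$, defining each new right super-node $W'_i$ to be the Cartesian product of its bucket (alphabet $|\Sigma_W|^{h_0/h}=2^{O(n/h)}$), and inheriting edges by coordinate-wise consistency. Labelings of the new right super-nodes correspond bijectively to labelings of the old ones, so the gap $\gamma$ and the left-alphabet bound $(1/\gamma)^{O(1)}$ are both preserved. The final instance has size $|\Gamma|=O(h\cdot 2^{O(n/h)})$, dominated by the right side; an algorithm running in $O_{h,\gamma}(|\Gamma|^{\delta h})$ time, composed with this reduction, would solve Gap-ETH 3SAT in $2^{O(\delta n)}$ time, which contradicts the quantitative form of Gap-ETH for any $\delta$ smaller than the Gap-ETH exponent constant.

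The hard part will be the simultaneous balancing of three competing constraints: $|\Sigma_U|\le(1/\gamma)^{O(1)}$ forces $k=O(\log(1/\gamma))$ and hence $r=\Omega(m/\log(1/\gamma))$; the choice $\ell=r/\gamma$ must fit within the disperser existence bound $\ell\le e^{m/r}$; and the right-side alphabet $2^{O(n/h)}$ must cooperate so that $|\Gamma|^{\delta h}=2^{O(\delta n)}$ still beats the Gap-ETH lower bound. For extreme regimes (very small $h$, or $\gamma$ that is tiny relative to $n$), I expect to need either a preliminary PCP-style gap amplification to shrink $m$ before the disperser is applied, or an iterated multi-stage construction that reduces the gap in rounds while keeping the alphabet bounded at each round.
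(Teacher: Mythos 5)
There is a genuine gap, and it sits exactly where you flagged ``the hard part'': the disperser parameters you need do not exist (at least not via the random construction you are invoking), and the correct fix is to abandon the disperser altogether. To keep $|\Sigma_U|\le(1/\gamma)^{O(1)}$ you must take $k=O(\log(1/\gamma))$, a constant independent of $m$. The disperser relation $k=\lceil 3m/(\varepsilon r)\rceil$ then forces $r=\Theta(m/\log(1/\gamma))=\Theta(m)$, so the existence condition $\ln\ell\le m/r$ caps $\ell$ at $e^{O(\log(1/\gamma))}=(1/\gamma)^{O(1)}$, a constant. But your soundness target $r/\ell\le\gamma$ demands $\ell\ge r/\gamma=\Omega(m)$, which grows with $n$. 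These two requirements are incompatible for large $m$, so the step ``choosing $\ell=r/\gamma$ within the feasibility bound $\ell\le e^{m/r}$'' cannot be carried out. Your fallback suggestions (a preliminary PCP-style amplification, or an iterated construction) do not resolve this: the tension is between constant $k$ and the number of left super-nodes, not between $m$ and $h$.

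The paper's proof of Theorem~\ref{MaxCover_R} sidesteps dispersers entirely. It takes \emph{all} $\ell=\binom{m}{k}$ subsets of $k$ clauses as left super-nodes and uses a direct counting bound: any fixed labeling of the right side corresponds to an assignment satisfying at most $(1-\varepsilon)m$ clauses in the soundness case, hence it covers at most $\binom{(1-\varepsilon)m}{k}\big/\binom{m}{k}\le e^{-\varepsilon k}=\gamma$ fraction of left super-nodes once $k=\ln(1/\gamma)/\varepsilon$. The relevant pseudorandom property here is not ``any $r$ subsets have large union'' but ``only a $\gamma$ fraction of the chosen $k$-subsets fit inside any fixed $(1-\varepsilon)m$-sized set,'' and the complete collection has this property for free, with $\ell=m^{O(k)}$ still dominated by $|\Sigma_W|=2^{n/h}$ (this is exactly why the theorem restricts how small $\gamma$ may be). Your second step---bucketing the right super-nodes into $h$ groups of $n/h$ variables each---matches the paper and is fine; it is only the left-side compression that needs to be replaced by the counting argument.
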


Note that in the statement of this theorem, $h$ can be any fixed constant, while in the statement of {\sf Gap-ETH}, $h$ is the number of variables which goes to infinity. Thus, a straightforward idea is to compress the variables into $h$ groups, each of size $n/h$.

After grouping variables, we can also compress the clauses in order to amplify the soundness parameter to $\gamma$. Specifically, let $k=\ln(\frac{1}{\gamma})/\varepsilon$, take $\ell=\binom{m}{k}$ left super-nodes, each corresponding to satisfying assignments of the AND of some $k$ clauses. If only $(1-\varepsilon) m$ clauses in the original {\sc 3SAT} instance can be satisfied, then only $\binom{(1-\varepsilon)m}{k}$ clauses in the new instance can be satisfied, leading to a soundness parameter $\binom{(1-\varepsilon)m}{k}/\binom{m}{k}\le e^{-\varepsilon k}=\gamma$. Furthermore, $|\Sigma_U|=O(1)^{k}\le (1/\gamma)^{O(1)}$.

One important thing is that we need to make sure $\ell$ and $|\Sigma_U|$ can be bounded by $|\Sigma_W|=2^{n/h}$, so that $|\Sigma_W|$ is the dominating term in $|\Gamma|$. Thus, $\gamma$ cannot be arbitrarily small. Fortunately in its major applications (e.g. hardness of {\sc SetCover}), this will not be the bottleneck.

Next we proceed to discuss hardness of {\sc MinLabel}.

\begin{thm}[Theorem 4.4 in \cite{CCK+17} ]\label{MinLabel}
	Assuming {\sf Gap-ETH}, there exists constants $\delta,\rho>0$ such that, for any $h \ge \rho$ and $1 \ge \gamma > 0$, no algorithm can take a {\sc MinLabel} instance $\Gamma$ with $h$ right super-nodes, distinguish between the following cases in $O_{h,\gamma}(|\Gamma|^{\delta h})$ time:
	\begin{itemize}
		\item (Completeness) {\sc MinLabel$(\Gamma)=1$}.
		\item (Soundness) {\sc MinLabel$(\Gamma)>\gamma^{-1/h}$}.
	\end{itemize}
	This holds even when $|\Sigma_U| \le (1/\gamma)^{O(1)}$.
\end{thm}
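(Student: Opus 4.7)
The plan is to reduce directly from the approximate \textsc{MaxCover} problem of Theorem \ref{MaxCover_R} by reinterpreting the same bipartite graph $\Gamma=(\bigcup_i U_i,\bigcup_j W_j,E)$ as a \textsc{MinLabel} instance. Since both problems take the same input, no transformation of the graph is needed; I only have to argue that the \textsc{MaxCover} gap $1$ vs.\ $\gamma$ translates into a \textsc{MinLabel} gap $1$ vs.\ $\gamma^{-1/h}$, and that the parameters $h$, $|\Sigma_U|$, and $|\Gamma|$ are preserved verbatim (so that the resulting time lower bound and alphabet constraint follow immediately from the source theorem).

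Completeness is essentially by definition: if $\textsc{MaxCover}(\Gamma)=1$, there is a labeling $S\subseteq W$ with $|S\cap W_j|=1$ for every $j$ that covers every left super-node. Viewed as a multi-labeling, this $S$ has $|S|=h$ and still covers every $U_i$, so $\textsc{MinLabel}(\Gamma)\le 1$. For soundness, I would use an averaging argument. Suppose $S$ realizes $\textsc{MinLabel}(\Gamma)=t$, and write $s_j:=|S\cap W_j|$ so that $\sum_j s_j=ht$. Sample a labeling $L$ by choosing one vertex independently and uniformly from each $S\cap W_j$. For every $U_i$, since $S$ covers $U_i$, some $u_i\in U_i$ has at least one neighbor inside each $S\cap W_j$; hence $L$ covers $U_i$ via $u_i$ with probability at least $\prod_j 1/s_j$. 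Taking expectations, some fixed labeling covers at least a $\prod_j 1/s_j$ fraction of the $U_i$'s, so $\textsc{MaxCover}(\Gamma)\ge\prod_j 1/s_j$. By AM--GM with $\sum_j s_j=ht$ fixed, $\prod_j s_j\le t^{\,h}$, and therefore $\textsc{MaxCover}(\Gamma)\ge t^{-h}$. Contrapositively, $\textsc{MaxCover}(\Gamma)<\gamma$ forces $t>\gamma^{-1/h}$, which is the desired \textsc{MinLabel} soundness.

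Because the reduction is the identity on instances, the bounds $h$, $|\Sigma_U|\le(1/\gamma)^{O(1)}$, and $|\Gamma|$ carry over unchanged, and an $O_{h,\gamma}(|\Gamma|^{\delta h})$-time \textsc{MinLabel} algorithm would solve the gapped \textsc{MaxCover} problem in the same time, contradicting Theorem \ref{MaxCover_R}. There is no real obstacle: the only substantive step is the AM--GM bound, and it is exactly its tightness (attained when the $s_j$'s are equal) that dictates the shape $\gamma^{-1/h}$ of the gap and explains why \textsc{MinLabel} soundness grows only mildly in $h$ even when \textsc{MaxCover} soundness is driven to a very small $\gamma$. I would not need to invoke projection property or any other structural feature of the source instance, which suggests that the same reduction will work equally well when we later want analogous \textsc{MinLabel} hardness from weaker hypotheses, provided a suitable \textsc{MaxCover} gap theorem is available.
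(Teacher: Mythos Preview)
Your proposal is correct and follows essentially the same route as the paper: reuse the \textsc{MaxCover} instance from Theorem~\ref{MaxCover_R} verbatim, observe completeness is immediate, and for soundness sample a random labeling from a purported small multi-labeling and apply AM--GM to turn the product $\prod_j 1/s_j$ into $(|S|/h)^{-h}$. The paper carries the neighbor counts $|\mathcal N(u_i,W_j\cap S)|$ through the chain before bounding by $|S|$, whereas you go straight to the coarser bound $1/s_j$; this makes no difference to the argument or the final gap.
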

The instance is exactly the one in Theorem \ref{MaxCover_R}. It is easy to see that in the completeness case, {\sc MaxCover}$=1$ implies {\sc MinLabel}$=1$. We only need to additionally argue that, in the soundness case, small {\sc MaxCover} implies large {\sc MinLabel}. Prove by contradiction, if {\sc MinLabel}$\le\gamma^{-1/h}$, we can fix this multi-labeling, and pick a vertex uniformly at random from each right super-node to form a labeling. By proving the expected fraction of covered left super-nodes $\ge \gamma$, we have {\sc MaxCover}$\ge \gamma$. In the following we suppose the optimal right multi-labeling $S \subseteq W$ covers left vertices $u_1 \in U_1,\ldots,u_h \in U_h$.
$$\begin{aligned}
	{\sc MaxCover} \ge & \mathbb E\left[\frac{1}{\ell}\sum_{i=1}^{\ell}[u_i \text{ is covered}]\right] \\ = & \frac{1}{\ell}\sum_{i=1}^{\ell} \prod_{j=1}^{h}|\mathcal N(u_i,W_j \cap S)|^{-1}\\
	\ge & \frac{1}{\ell}\sum_{i=1}^{\ell} \left( \frac{1}{h} \sum_{j=1}^h|\mathcal N(u_i,W_j \cap S)|\right)^{-h} \\
	\ge & \frac{1}{\ell}\sum_{i=1}^{\ell}\left(\frac{|S|}{h}\right)^{-h} \\
	= & \gamma
\end{aligned}$$

The left alphabet size is $(1/\gamma)^{O(1)}$, as in Theorem \ref{MaxCover_R}.

\subsection{Gap Producing via Threshold Graph Composition}
\label{TGC-KN21}
\textit{Threshold Graph Composition} is a powerful gap-producing technique. It was first proposed by Lin in his breakthrough work \cite{Lin15}, and has been used to create gap for many parameterized problems \cite{CL16,Lin19,BBE+19,KL21}. 

At a high level, in TGC we compose an instance which has no gap, with a threshold graph which is oblivious to the instance, to produce a gap instance of that problem. The two main challenges of TGC are the creation of a threshold graph with desired properties, and the right way to compose the input and the threshold graph, respectively. 

In this subsection, we introduce a delicate threshold graph, which is constructed via error correcting codes. This graph was proposed by Karthik et al. \cite{KL21}, and had many applications such as in proving hardness of {\sc MaxCover} starting from {\sf W[1]$\neq$FPT} or {\sf ETH} (later in this Section), and in simplifying the proof of {\sc $k$-SetCover} inapproximability in \cite{Lin19} (see Section \ref{SetCover-TGC}).

We first formalize some definitions related to error correcting codes.

\begin{defn}[Error Correcting Codes]
	Let $\Sigma$ be a finite set, for every $\ell \in \mathbb N$ a subset $C:\Sigma^r \to \Sigma^\ell$ is an error correcting code with message length $r$, block length $\ell$ and relative distance $\delta$ if for every $x,y \in \Sigma^r$, $\Delta(C(x),C(y))\ge \delta$.  We denote then $\Delta(C)=\delta$. Here $\Delta(x,y)=\frac{1}{\ell} |\{i \in [\ell] |x_i \ne y_i\}|$.
\end{defn}

We sometimes abuse notations and treat an error correcting code as its image, i.e., $C \subset \Sigma^{\ell}$.

\begin{defn}[Collision Number]
	The collision number of an error correcting code $C$ is the smallest number $s$ such that there exists a set $S \subseteq C$ with $|S|=s$, and for every $j \in [\ell]$, there are two strings $x,y \in S$ such that $x_j=y_j$. We denote this number as $Col(C)$.
\end{defn}

For any error correcting code $C:\Sigma^r \to \Sigma^\ell$ and any $k \in \mathbb N$, we can take it to build a bipartite threshold graph $G=(A \dot\cup B, E)$ with the following properties efficiently:
\begin{itemize}
	\item $A$ is divided into $k$ groups, each of size $|\Sigma|^r$. $B$ is divided into $\ell$ groups, each of size $|\Sigma|^k$.
	\item (Completeness) For any $a_1 \in A_1,\ldots a_k \in A_k$ and for any $j \in [\ell]$, there is a unique vertex $b \in B_j$ which is a common neighbor of $\{a_1,\ldots,a_k\}$.
	\item (Soundness) For every $i \in [k]$ and every distinct $a \ne a'\in A_i$, for $\Delta(C)\cdot \ell$  of the parts $j \in [\ell]$, we have that $\mathcal N(a) \cap \mathcal N(a') \cap \mathcal B_j=\emptyset$.
	\item (Collision Property) Let $X \subseteq A$ such that for every $j \in [\ell]$, there exists $b \in B_j$ which is a common neighbor of (at least) $k+1$ vertices in $X$. Then $|X| \ge Col(C)$.
\end{itemize}

The graph is constructed as follows:

\begin{itemize}
	\item For every $i \in [k]$, we associate $A_i$ with all codewords in $C$, i.e. each vertex in $A_i$ is a unique codeword in the image of $C$. 
	\item For every $j \in [\ell]$, we associate $B_j$ with the set $\Sigma^k$.
	\item A vertex $a \in A_i$ and a vertex $b \in B_j$ are linked if and only if $a_j=b_i$.
\end{itemize}

We can think of the graph as an $k \times \ell$ matrix, when picking vertices from each $A_i, i \in [k]$, we are filling the codewords into each row of the matrix. By reading out each column of the matrix, we can pick exactly one common neighbor of them in each $B_j, j \in [\ell]$, satisfying the completeness property.

The soundness property is also straightforward: if we pick two vertices $a,a'$ from the same left group $A_i$, the two codewords differ in at least $\Delta(C) \cdot \ell$ positions. For those columns we cannot ``read out'' any $k$-bit string whose $i$-th bit equals to two different characters.

As for the collision property, for a set $X \subseteq A$, if for every $j \in [\ell]$ there exists $b \in B_j$ which is a common neighbor of at least $k+1$ vertices in $X$, it's easy to see that for any $j \in [\ell]$, we can pick $x,y \in X$ such that $x_j=y_j$ by pigeonhole principle.

Now we describe how to compose this threshold graph with a {\sc $k$-MaxCover} instance $\Gamma$ where the parameter $k$ denotes the number of right super-nodes, to produce a {\sc Gap $k$-MaxCover} instance $\Gamma'$ such that:
\begin{itemize}
	\item (Completeness) If {\sc MaxCover}$(\Gamma)=1$, then {\sc MaxCover}$(\Gamma')=1$.
	\item (Soundness) If {\sc MaxCover}$(\Gamma)<1$, then {\sc MaxCover}$(\Gamma')\le 1-\Delta(C)$.
\end{itemize}

Given a {\sc $k$-MaxCover} instance $\Gamma=G=(U \dot\cup W, E)$ with pseudo projection property, where $W=W_1\dot\cup\ldots\dot\cup W_k$ and $U=U_1\dot\cup\ldots\dot\cup U_t$, and a threshold graph $G'=(A \dot\cup B,E')$, where $A=A_1\dot\cup\ldots\dot\cup A_t$ and $B=B_1 \dot\cup\ldots\dot\cup B_\ell$, w.l.o.g. assume $|\Sigma|^r \ge \max_{i=1}^t |U_i|$, we build the new {\sc $k$-MaxCover} instance $\Gamma'$ as follows.
\begin{itemize}
	\item Arbitrarily match every vertex $u_i \in U_i$ to a vertex in $a_i \in A_i$ without repetitions. This can be done since $|\Sigma|^r \ge \max_{i=1}^t |U_i|$.
	\item The new right super-nodes are $W_1 \ldots W_k$, and the new left super-nodes are $B_1 \ldots B_\ell$.
	\item A right vertex $w \in W_i$ and a left vertex $b \in B_j$ are linked if and only if there exists $u_1 \in U_1,\ldots, u_t \in U_t$ such that $w_i$ is linked to each $u_1 \ldots u_t$ in $G$, and $b$ is linked to the matching $a_1 \ldots a_t$ in $G'$.
\end{itemize}

The completeness case is obvious, by picking one $w$ in each right super-node $W_i$, there is a common neighbor in each left super-node $U_i$. Consider their matching vertices $a_1 \ldots a_t$, there is exactly one common neighbor in each $B_i$. 

The soundness case needs the pseudo projection property of $G$, i.e., for every $i \in [k]$ and $j \in t$, edges between $W_i$ and $U_j$ either form a function, or are complete. Fix any labeling $w_1 \in W_1, \ldots, w_k \in W_k$, there must be a super-node $U_j$ which cannot be covered. This means there must be two left vertices $w,w'$ mapping to different vertices in $U_j$. Let the two vertices be $u,u' \in U_j$, and let the matching vertices of them be $a,a' \in A_j$, by the soundness property of threshold graph, only in $1-\Delta(C)$ fraction of parts $j \in [\ell]$ is there a common neighbor of $a,a'$ in $B_j$. This means the labeling $\{w_1,\ldots,w_k\}$ can only cover $(1-\Delta(C))\cdot \ell$ parts of $B$, i.e., {\sc MaxCover}$(\Gamma')\le 1-\Delta(C)$.

Next we use this technique to prove strong inapproximability results of {\sc $k$-MaxCover} based on {\sf W[1]$\neq$FPT} and {\sf ETH}.

\begin{thm}[Theorem 4.3 in \cite{KL21} ]
	Assuming {\sf W[1]$\neq$FPT}, for any computable function $f$, there is no $f(k)\cdot \text{poly}(n)$ time algorithm which can take a {\sc MaxCover} instance $\Gamma$ with $k$ right super-nodes, distinguish between the following two cases:
	\begin{itemize}
		\item (Completeness) {\sc MaxCover$(\Gamma)=1$}.
		\item (Soundness) {\sc MaxCover$(\Gamma)\le n^{-O(\frac{1}{\sqrt k})}$}.
	\end{itemize}
\end{thm}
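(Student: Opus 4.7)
The plan is to instantiate the threshold graph composition (TGC) from Section~\ref{TGC-KN21} on a gapless pseudo-projection {\sc MaxCover} encoding of {\sc $K$-Clique}, using a Reed--Solomon code to produce the advertised $n^{-O(1/\sqrt{k})}$ gap in a single shot.

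\emph{Step 1 (gapless reduction).} Given a {\sc $K$-Clique} instance $G$ on $N$ vertices, I would use the \emph{edge-selection} encoding: take $K$ left super-nodes $U_1,\dots,U_K$, each a copy of $V(G)$, and $\binom{K}{2}$ right super-nodes $\{W_{ij}\}_{1\le i<j\le K}$, each a copy of $E(G)$; connect $e\in W_{ij}$ to $u\in U_\iota$ iff $\iota\in\{i,j\}$ and $u$ is the corresponding endpoint of $e$, leaving all other $(W_{ij},U_\iota)$ pairs fully bipartite. The resulting instance $\Gamma$ has pseudo projection property, $k=\binom{K}{2}$ right super-nodes, $t=K$ left super-nodes of size $N$, and $\mathrm{MaxCover}(\Gamma)=1$ iff $G$ contains a $K$-clique (otherwise it is strictly less than $1$).

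\emph{Step 2 (composition with a Reed--Solomon code).} Fix a small constant $\alpha>0$, let $q$ be the smallest prime power with $q\ge N^\alpha$, and let $C\colon\mathbb{F}_q^r\to\mathbb{F}_q^\ell$ be the Reed--Solomon code of dimension $r=\lceil\log_q N\rceil=O(1/\alpha)$ and block length $\ell=q$. Then $|\Sigma|^r=q^r\ge N\ge\max_i|U_i|$ as required by TGC, and $\Delta(C)=1-(r-1)/q=1-O(N^{-\alpha})$. Applying the TGC recipe of Section~\ref{TGC-KN21} to $\Gamma$ with this $C$ yields a pseudo-projection {\sc MaxCover} instance $\Gamma'$ with the same $k=\binom{K}{2}$ right super-nodes, $\ell=q$ left super-nodes each of size $|\Sigma|^t=q^K$, soundness $\mathrm{MaxCover}(\Gamma')\le 1-\Delta(C)=O(N^{-\alpha})$ in the NO case, and completeness $\mathrm{MaxCover}(\Gamma')=1$ in the YES case.

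\emph{Step 3 (parameter accounting).} The total size is $n=\mathrm{poly}(N)\cdot q^{K+1}=N^{\Theta(\alpha K)}$, so $\log n=\Theta(\alpha K\log N)$ while $\log(\mathrm{soundness})=-\alpha\log N+O(1)$. Hence
\[
\frac{\log(\mathrm{soundness})}{\log n}\;=\;-\,\Theta\!\left(\frac{1}{K}\right)\;=\;-\,\Theta\!\left(\frac{1}{\sqrt{k}}\right),
\]
which is exactly the claimed ratio $n^{-O(1/\sqrt{k})}$. The reduction is polynomial in $N$ and preserves the parameter $k=\binom{K}{2}$, so any $f(k)\cdot\mathrm{poly}(n)$ algorithm distinguishing the two cases for $\Gamma'$ would give an $f(\binom{K}{2})\cdot N^{O(1)}$ algorithm for {\sc $K$-Clique}, contradicting {\sf W[1]$\neq$FPT}.

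\emph{Main obstacle.} The delicate point is matching the exponents in Step~3 with the ``right'' choice of encoding in Step~1. TGC inflates the size like $q^t$ while keeping the FPT parameter equal to the input's number of right super-nodes $k$, so to obtain the exponent $1/\sqrt{k}$ in the soundness one needs $t=\Theta(\sqrt{k})$. The edge-selection encoding of {\sc Clique} is tailored for exactly this: $t=K$ and $k=\binom{K}{2}$, giving $t=\Theta(\sqrt{k})$. The dual vertex-selection encoding (with $t=\binom{K}{2}$, $k=K$) would instead blow up $n$ like $q^{K^2}$ and yield only an $n^{-\Theta(1/k^2)}$ gap. Once the encoding is chosen correctly, no iteration or recursive application of TGC is needed; a single Reed--Solomon composition delivers the whole bound.
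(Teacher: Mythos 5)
Your overall strategy --- the canonical edge/vertex {\sc MaxCover} encoding of {\sc $K$-Clique} with $\binom{K}{2}$ right and $K$ left super-nodes, composed once with a Reed--Solomon threshold graph as in Section~\ref{TGC-KN21} --- is exactly the paper's route, and your ``main obstacle'' discussion (needing $t=\Theta(\sqrt{k})$ left super-nodes) identifies the right design constraint. However, Step~2 contains a genuine error in the choice of the field size. You fix a \emph{constant} $\alpha>0$ and take $q\ge N^{\alpha}$, so each new left super-node $B_j$ has size $q^{K}=N^{\alpha K}$ and, as you yourself compute in Step~3, the composed instance has size $n=N^{\Theta(\alpha K)}$. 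This is not polynomial in $N$: the reduction already needs $N^{\Theta(K)}$ time just to write down its output, so it is not an FPT reduction, and your closing claim that ``the reduction is polynomial in $N$'' contradicts your own size bound. A hypothetical $f(k)\cdot\mathrm{poly}(n)$ algorithm for $\Gamma'$ would therefore only yield an $f\bigl(\tbinom{K}{2}\bigr)\cdot N^{O(K)}$ algorithm for {\sc $K$-Clique}, which is no better than brute force and does not contradict {\sf W[1]$\neq$FPT}. The ratio computation $\log(\mathrm{soundness})/\log n=-\Theta(1/K)$ is formally correct, but only because $n$ itself has been inflated to $N^{\Theta(K)}$; the gap is being ``bought'' with a non-FPT blow-up.

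The fix is to let the exponent depend on $K$: take $q$ to be roughly $N^{1/K}$ and $r=\lceil\log_{q}N\rceil=\Theta(K)$, which is exactly the paper's parameter setting. Then each $B_j$ has size $q^{K}\le N^{O(1)}$, the number of left super-nodes is $\ell=q\le N$, so $n=\mathrm{poly}(N)$ and the reduction is genuinely polynomial. The soundness becomes $1-\Delta(C)=O(r/q)=O\bigl(K/N^{1/K}\bigr)$, which is $N^{-\Omega(1/K)}=n^{-\Omega(1/\sqrt{k})}$ once $N\ge K^{O(K)}$ (a harmless assumption, since smaller instances can be solved by brute force in FPT time). With this single parameter change your argument goes through and coincides with the paper's proof.
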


\begin{proof}
	First reduce {\sc $k$-Clique} to {\sc MaxCover} with $K=\binom{k}{2}$ right super-nodes and $k$ left super-nodes in the canonical way. Note that this {\sc MaxCover} instance has pseudo projection property. Then take a Reed-Solomon Code to build the threshold graph. To ensure the right alphabet size (which is $|\Sigma|^k$ here) $\le n$, we need $|\Sigma| \le n^{\frac{1}{k}}$, and to ensure $|\Sigma|^r \ge \max_{i=1}^t |U_i|=n$, we need $r\ge\log_{|\Sigma|} n =k$. Thus according to the properties of Reed-Solomon Code, the soundness parameter is $1-\Delta(C)=1-(1-\frac{r}{|\Sigma|})=n^{-O(\frac{1}{k})}=n^{-O(\frac{1}{\sqrt K})}$.
\end{proof}

\begin{thm}[Theorem 4.4 in \cite{KL21} ]
	Assuming {\sf ETH}, there is no $n^{o(k)}$ time algorithm which can take a {\sc MaxCover} instance $\Gamma$ with $k$ right super-nodes, distinguish between the following two cases:
	\begin{itemize}
		\item (Completeness) {\sc MaxCover$(\Gamma)=1$}.
		\item (Soundness) {\sc MaxCover$(\Gamma) \le n^{-O(\frac{1}{k^3})}$}.
	\end{itemize}
\end{thm}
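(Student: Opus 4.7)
The plan is to follow the same recipe as the preceding W[1]-based theorem, but with parameters chosen to preserve ETH's $n^{\Omega(k)}$ lower bound rather than merely an FPT lower bound. I would start from $k$-Clique on $n$-vertex graphs, which under ETH admits no $n^{o(k)}$ algorithm (this follows from the standard partition-style reduction from $3$SAT, grouping the $N$ variables into $k$ blocks of size $N/k$ and taking partial assignments as vertices). Applying the canonical reduction gives a pseudo-projection \textsc{MaxCover} instance $\Gamma$ with $\binom{k}{2}$ right super-nodes (edge slots) and $k$ left super-nodes (vertex slots), exactly as in the previous theorem.

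Then I would compose $\Gamma$ with a threshold graph built from a Reed-Solomon code $C:\Sigma^r\to\Sigma^{\ell}$ with $\ell=|\Sigma|$, via the TGC construction of Section~\ref{TGC-KN21}. The key new constraint, relative to the W[1] version, is that the instance size $n'$ of the output \textsc{MaxCover} instance $\Gamma'$ must be \emph{polynomial} in $n$: only then does ETH's $n^{\Omega(k)}$ bound transfer cleanly to $\Gamma'$ (as opposed to the FPT blow-up allowed in the W[1] setting). This forces $|\Sigma|^{k}$ (the size of a new left super-node $B_j$) to be polynomial in $n$, giving $|\Sigma|\le n^{O(1/k^2)}$. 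Together with the usual matching requirement $|\Sigma|^r\ge n$, this forces $r=\Omega(k^2)$. The resulting Reed-Solomon soundness is
\[
1 - \Delta(C) \;=\; \frac{r-1}{|\Sigma|} \;=\; n^{-\Omega(1/k^3)},
\]
after accounting for the quadratic relationship between the Clique parameter and the number of right super-nodes in the final \textsc{MaxCover} instance. Completeness and the soundness argument carry over verbatim from the TGC construction: a covering labeling of the original $\Gamma$ gives a covering labeling of $\Gamma'$ via the threshold graph's completeness property, whereas a labeling of $\Gamma'$ that fails to cover even one left super-node of $\Gamma$ yields two distinct images in some $U_j$, and then the soundness property of the code kicks in to kill a $\Delta(C)$ fraction of the $B_j$'s.

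The main obstacle is the tight parameter balancing in the choice of $|\Sigma|$: it must be small enough for $|\Sigma|^{k}$ to stay polynomial in $n$ (so that ETH's $n^{\Omega(k)}$ lower bound is preserved rather than diluted by the reduction), yet also large enough for $|\Sigma|^{r}\ge n$ to hold with $r$ not too large (so that the Reed-Solomon distance is still meaningful). The $n^{-O(1/k^3)}$ soundness in the statement is exactly what this trade-off permits, and it is strictly weaker than the $n^{-O(1/\sqrt{k})}$ soundness of the W[1] theorem because the ETH reduction cannot afford the FPT-sized instance blow-up that the W[1] setting tolerates.
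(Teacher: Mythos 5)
There is a genuine gap, and it is precisely the point the paper's proof is engineered around: the tightness of the $n^{o(k)}$ lower bound in terms of the number of right super-nodes. You start from {\sc $k$-Clique} and apply the canonical reduction, which produces a {\sc MaxCover} instance with $K=\binom{k}{2}$ right super-nodes. An algorithm for gap-{\sc MaxCover} with $K$ right super-nodes running in time $n'^{o(K)}$ then only yields an $n^{o(k^2)}$-time algorithm for {\sc $k$-Clique}, which does \emph{not} contradict the ETH-based $n^{\Omega(k)}$ lower bound for {\sc $k$-Clique}. What your route actually proves is a lower bound of $n^{\Omega(\sqrt{K})}$ for {\sc MaxCover} with $K$ right super-nodes, i.e., exactly the same quality of time bound as the preceding {\sf W[1]} theorem, whereas the statement demands $n^{\Omega(k)}$ where $k$ counts the right super-nodes of the instance itself. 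To preserve the linear exponent one must reduce from {\sc 3SAT} directly so that the $k$ right super-nodes are $k$ groups of clauses, each of size $N=2^{\Theta(n/k)}$: then an $N^{o(k)}$ algorithm gives $2^{o(n)}$ for {\sc 3SAT}. This is what the paper does, taking $t=\binom{k}{1}+\binom{k}{2}+\binom{k}{3}=\Theta(k^3)$ left super-nodes (one for each set of one/two/three clause-groups sharing variables). That $t=\Theta(k^3)$ is also exactly where the exponent $1/k^3$ in the soundness comes from: each new left super-node of the composed instance has size $|\Sigma|^{t}$, so keeping the instance size at most $N^{O(1)}$ forces $|\Sigma|\le N^{O(1/k^3)}$, hence $r=\Omega(k^3)$ and Reed--Solomon soundness $r/|\Sigma|=N^{-O(1/k^3)}$.

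A secondary problem is your parameter arithmetic, which does not follow from the construction you describe. With $k$ left super-nodes in the input instance, the new left super-nodes $B_j$ have size $|\Sigma|^{k}$ (the exponent is the number of \emph{left} super-nodes of $\Gamma$, not $\binom{k}{2}$), so requiring this to be $\mathrm{poly}(n)$ gives $|\Sigma|\le n^{O(1/k)}$ and $r=\Omega(k)$, yielding soundness $n^{-O(1/k)}$ --- and even granting your $|\Sigma|\le n^{O(1/k^2)}$ and $r=\Omega(k^2)$, the quantity $r/|\Sigma|$ would be $n^{-O(1/k^2)}$, not $n^{-O(1/k^3)}$. The $1/k^3$ cannot be produced by ``accounting for the quadratic relationship'' after the fact; it is forced by the $\Theta(k^3)$ consistency-checking left super-nodes that only the direct {\sc 3SAT} reduction introduces.
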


\begin{proof}
	First reduce {\sc 3SAT} to {\sc MaxCover} with $k$ right super-nodes and $t=\binom{k}{1}+\binom{k}{2}+\binom{k}{3}$ left super-nodes. Each right super-node corresponds to satisfying assignments of some $m/k$ clauses, and thus has $N=2^{\Theta(n/k)}$ vertices in it. Each left super-node corresponds to assignments to variables which appear in exactly some one/two/three groups of clauses. This {\sc MaxCover} instance also has pseudo projection property. Note that here it's necessary to group the variables to change the number of left super-nodes from $n$ to $\binom{k}{1}+\binom{k}{2}+\binom{k}{3}$, because in our construction of threshold graph, the size of each new right super-node $B$ is $|\Sigma|^t$, which is too large if $t=n$. After that, we still use Reed-Solomon Codes. Now to ensure $|\Sigma|^t \le N$ we need $|\Sigma| \le 2^{n/k^4}$, and to ensure $|\Sigma|^r \ge N$ we need $r \ge \log_{|\Sigma|}N=\Omega(k^3)$. The soundness parameter is $1-\Delta(C)=1-(1-\frac{r}{|\Sigma|})=\frac{k^3}{2^{n/k^4}}=N^{-O(\frac{1}{k^3})}$.
\end{proof}

\clearpage
\section{\sc $k$-Clique}
\label{sec:kclique}

{\sc Clique} is arguably the first natural combinatorial optimization problem. Its inapproximability in the NP regime is studied extensively \cite{BGLR93,BS94,Has96,FGL+96,Gol98,FK00,Zuc07}. However, in the parameterized perspective, although {\sc $k$-Clique} is known to be the complete problem of {\sf W[1]}, and even cannot be solved in $n^{o(k)}$ time assuming {\sf ETH} \cite{CHKX06b}, there is still a lot of work to do on its parameterized inapproximability. 

To approximate {\sc $k$-Clique} to a factor of $\rho$, we only need to compute a clique of size $k/\rho$, which can be trivially done in $n^{k/\rho}$ time. In their milestone work, Chalermsook et al.  \cite{CCK+17} showed this cannot be improved assuming {\sf Gap-ETH}. However, results based on non-gap assumptions are not reached until very recently Lin \cite{Lin21} showed constant approximating {\sc $k$-Clique} is {\sf W[1]}-hard. He also obtained an $n^{\Omega(\sqrt[5]{\log k})}$ lower bound for constant approximating {\sc $k$-Clique} based on {\sf ETH}, and this bound was recently improved to $n^{\Omega(\log k)}$ by \cite{LRSW21}.

The following table lists current state-of-the-art inapproximability results of {\sc $k$-Clique} based on different hypotheses. Here $f$ can be any computable function. 

\begin{center}
\begin{tabular}{c|c|c|c}
Complexity Assumption & Inapproximability Ratio & Time Lower Bound   & Reference   \\ \hline
\multirow{2}{*}{\sf W[1]$\neq$FPT} & \multirow{2}{*}{Any constant} & \multirow{2}{*}{$f(k)\cdot \text{poly}(n)$} & \multirow{2}{*}{\cite{Lin21}} \\ & & & \\ \hline
\multirow{2}{*}{\sf PIH} & \multirow{2}{*}{Any constant} & \multirow{2}{*}{$f(k)\cdot \text{poly}(n)$} & \multirow{2}{*}{/} \\ & & & \\ \hline
\multirow{2}{*}{{\sf ETH}} & Any constant & $f(k)\cdot n^{\Omega(\log k)}$ & \multirow{2}{*}{\cite{LRSW21}} \\
  & $k^{o(1)}$ & $f(k)\cdot \text{poly}(n)$ & \\ \hline
\multirow{2}{*}{\sf Gap-ETH} & \multirow{2}{*}{$\rho=o(k)$} & \multirow{2}{*}{$f(k) \cdot n^{\Omega(k/\rho)}$} & \multirow{2}{*}{\cite{CCK+17}} \\ & & & \\ \hline
\end{tabular}
\end{center}

We shall notice that the colored version and uncolored version of {\sc $k$-Clique} are equivalent, because a colored version can be interpreted as an uncolored version by leaving each group an independent set, and we can make $k$ different copies of the original graph to transform an uncolored version to a colored version.

\subsection{Reduction from {\sc MaxCover} with Projection Property}

The {\sf Gap-ETH} hardness of {\sc $k$-Clique} directly follows from Theorem \ref{MaxCover_L}. Since the instance has projection property, two left vertices agree if and only if they map to the same vertex in each right super-node, and $k$ left vertices agree if and only if they pairwise agree. Thus, we can transform a {\sc MaxCover} instance in Theorem \ref{MaxCover_L} with $k$ left super-nodes to an {\sc $k$-Clique} instance with the same value.

Therefore, as pointed out in Theorem \ref{MaxCover_L}, even deciding if there is a clique of size $\ge r$ needs $n^{\Omega(r)}$ time. Here $r$ can be any $\omega(1)$, which means approximating {\sc $k$-Clique} to any $\rho=k/r=o(k)$ ratio cannot be done in $f(k) \cdot n^{o(r)}=f(k)\cdot n^{o(k/\rho)}$ time.

Note the projection property is crucial, without which a {\sc MaxCover} instance cannot be reduced to an {\sc $k$-Clique} instance, because the agreement test of $k$ left vertices cannot be decomposed locally to agreement tests of $\binom{k}{2}$ pairs of left vertices.

One interesting thing is that the optimal inapproximability result of {\sc $k$-Clique} can be obtained from hypotheses other than (but similar to) {\sf Gap-ETH}, see the following as an example.

\begin{thm}
	Given an undirected graph with $n$ groups of $O(1)$ vertices, each forming an independent set, if there exists a constant $\varepsilon>0$ such that distinguishing between the following cases cannot be done in $2^{o(n)}$ time:
	\begin{itemize}
		\item (Completeness) there is a clique of size $n$.
		\item (Soundness) there is no clique of size $\varepsilon n$.
	\end{itemize}
	then {\sc $k$-Clique} cannot be approximated to any $\rho=o(k)$ ratio in $f(k)\cdot n^{o(k/\rho)}$ time.
\end{thm}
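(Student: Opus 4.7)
My plan is to mimic the disperser-based compression sketched for Theorem~\ref{MaxCover_L}, but applied directly to the colored clique setting rather than routing through \textsc{MaxCover}. Let $c>0$ denote the constant $\varepsilon$ from the hypothesis (so the soundness case has no clique of size $cn$), and fix $\eta \in (0, 1-c)$. Assuming for contradiction that, for some computable $f$ and some $g(k)=o(k/\rho(k))$, there is an FPT $\rho$-approximation algorithm $\mathcal{A}$ for \textsc{$k$-Clique} running in time $f(k)\cdot n^{g(k)}$, I will produce a $2^{o(n)}$-time algorithm for the hypothesized gap clique problem by reducing it to approximate \textsc{$k$-Clique} and invoking $\mathcal{A}$, where the parameter $k$ is chosen as a slowly-growing function of $n$ specified below.

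First I would construct the reduction. Given an input graph $H$ with groups $G_1,\ldots,G_n$ of constant size and a target $k$, set $r=k/\rho(k)$ and apply the disperser lemma (which requires $\ln k\le n/r$) to obtain subsets $I_1,\ldots,I_k\subseteq[n]$ of size $m=O(n/r)$ whose every $r$-wise union covers at least $(1-\eta)n$ indices. Define $H'$ with $k$ color classes $V_1,\ldots,V_k$, where $V_i$ enumerates the $O(1)^m=2^{O(n/r)}$ partial cliques of $H$ supported on $\bigcup_{j\in I_i} G_j$. Connect $v\in V_i$ and $v'\in V_j$ iff they agree on $I_i\cap I_j$ and their union induces a clique in $H$. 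The total vertex count is $N=k\cdot 2^{O(n/r)}=2^{O(n/r)}$.

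For completeness, restricting any size-$n$ clique of $H$ to each window $I_i$ yields a size-$k$ clique of $H'$. For soundness, any clique of $H'$ on $r$ distinct classes $i_1,\ldots,i_r$ must, by the agreement and clique-edge conditions, assemble into a clique of $H$ of size $|I_{i_1}\cup\cdots\cup I_{i_r}|\ge(1-\eta)n>cn$, contradicting the hypothesis. Thus a $\rho$-factor output from $\mathcal{A}$ on $H'$ decides the gap clique problem, in time $f(k)\cdot N^{g(k)}=f(k)\cdot 2^{O(n\,g(k)/r(k))}$.

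The main obstacle, and essentially the whole subtlety, is selecting $k=k(n)\to\infty$ that simultaneously achieves (i) $g(k)/r(k)=o(1)$, which follows from $g=o(r)$, (ii) the disperser feasibility condition $\ln k\le n/r(k)$, and (iii) $f(k)=2^{o(n)}$, absorbing the FPT prefactor. Since $f$ is computable and $\rho(k)=o(k)$, taking $k(n)$ to be the largest integer meeting all three constraints gives $k(n)\to\infty$ as $n\to\infty$ (for any fixed $k_0$, all three constraints hold once $n$ is large enough), making the overall runtime $2^{o(n)}$. This contradicts the hypothesis and establishes the claimed $f(k)\cdot n^{o(k/\rho)}$ lower bound.
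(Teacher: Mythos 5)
Your proposal is correct and is essentially the paper's intended argument: the paper's proof of this theorem is exactly "compose the groups using a disperser" in the same way as Theorem~\ref{MaxCover_L}, i.e., bundle the $n$ constant-size groups into $k$ windows via a disperser, enumerate the $2^{O(n/r)}$ transversal partial cliques per window, join on agreement plus the clique condition, and use the disperser guarantee to turn any $r$-clique of the composed graph back into a clique of size $(1-\eta)n>\varepsilon n$. Your handling of the choice of $k(n)$ and of absorbing $f(k)$ is the standard calculation and matches what the paper leaves implicit.
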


This assumption is a little weaker than {\sf Gap-ETH} since it can be obtained through the canonical reduction from {\sc 3SAT} to {\sc Clique}.

The proof is almost the same as that in Theorem \ref{MaxCover_L}: just compose the groups using a disperser. Details are omitted here.

\subsection{\sc $k$-VectorSum}

Before introducing {\sf W}[1]-hardness of constant approximating {\sc $k$-Clique}, we want to mention an important {\sf W[1]}-complete problem, {\sc $k$-VectorSum}, which is used as an intermediate problem in the reduction in \cite{Lin21}. 

In the {\sc $k$-VectorSum} problem, we are given $k$ groups of vectors $V_1,\ldots, V_k \subseteq \mathbb F^d$ together with a target vector $\vec t \in \mathbb F^d$, where $\mathbb F$ is some finite field and $d$ is the dimension of vectors. The goal is to decide whether there exists vectors $\vec v_1 \in V_1,\ldots, \vec v_k \in V_k$ such that $\sum_{i=1}^k \vec v_i=\vec t$.

It's easy to see {\sc $k$-VectorSum} with $|\mathbb F|=O(1)$ and $d=O(k\log n)$ is {\sf W[1]}-hard, and even does not admit $n^{o(k)}$ time algorithms assuming {\sf ETH}. The idea is to use entries of vectors to check the consistency in {\sc $k$-Clique} or {\sc 3SAT}.

\begin{thm}\label{thm:kclique_to_kvectorsum}
	Assuming {\sf W[1]$\neq$FPT}, {\sc $k$-VectorSum} with $\mathbb F=\mathbb F_2$ and $d=\Theta(k\log n)$ can not be solved in $f(k)\cdot \text{poly}(n)$ time.
\end{thm}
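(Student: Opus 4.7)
The plan is to reduce the $\sf W[1]$-hard colored {\sc $k$-Clique} problem to {\sc $k'$-VectorSum} with new parameter $k' = k + \binom{k}{2} = \Theta(k^2)$ and dimension $d = \Theta(k^2 \log n) = \Theta(k' \log n)$, matching the stated parameters after renaming $k' \mapsto k$. Since $k'$ depends computably on $k$, any $f(k')\cdot\text{poly}(n)$ algorithm for the constructed instance would solve colored {\sc $k$-Clique} in FPT time, contradicting $\sf W[1]\ne FPT$.

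First I would fix an injective encoding $\mathrm{ID}:V \to \mathbb F_2^{\lceil\log n\rceil}$ and partition the $d$ coordinates into one block $B_{ij}$ of length $2\lceil\log n\rceil$, viewed as two halves, for each pair $1 \le i < j \le k$. Then I would introduce two families of vector groups: a ``vertex group'' $G_i$ for each color $i\in[k]$, and an ``edge group'' $G_{ij}$ for each pair $i<j$. A vertex $u\in V_i$ contributes a vector placing $\mathrm{ID}(u)$ in the half of every block $B_{ij}$ (or $B_{ji}$) corresponding to color $i$, and zeros elsewhere. An edge $\{u,v\}\in E$ with $u\in V_i,\, v\in V_j,\, i<j$ contributes a vector placing $(\mathrm{ID}(u),\mathrm{ID}(v))$ in block $B_{ij}$ and zeros elsewhere. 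The target is $\vec t = \vec 0$.

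Correctness would follow from a block-by-block cancellation in $\mathbb F_2$. The only contributors to $B_{ij}$ are the picks from $G_i, G_j, G_{ij}$, and because $x+x=0$ in $\mathbb F_2$, these three contributions cancel iff the chosen edge's two endpoints equal the chosen vertices of colors $i$ and $j$. In the completeness direction a $k$-clique supplies such matching edges for every pair and the total is $\vec 0$; in the soundness direction the zero-sum condition enforces endpoint consistency in every block simultaneously, and since edge vectors exist only for actual edges of $G$, the chosen $k$-tuple of vertices is pairwise adjacent.

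The main difficulty is more organizational than conceptual: aligning $k'$ so that $d = \Theta(k'\log n)$ exactly. A naive variant using only $k$ single-$\mathrm{ID}$ vertex blocks of length $\lceil\log n\rceil$ (total $d=O(k\log n)$, no edge blocks) would force each edge vector to cancel two vertex-ID entries at once, and summing the $k$ copies of $\mathrm{ID}(u_i)$ into each vertex block leaves a residue $k\cdot\mathrm{ID}(u_i)\bmod 2$ that depends on the parity of $k$. The two-half-block-per-pair layout sidesteps this cleanly, and although it costs $d=\Theta(k^2\log n)$, this is exactly the $\Theta(k'\log n)$ the theorem calls for.
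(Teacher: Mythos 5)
Your reduction is correct and takes essentially the same approach as the paper: reduce colored {\sc $k$-Clique} to {\sc $\Theta(k^2)$-VectorSum} over $\mathbb F_2$ with target $\vec{\mathbf 0}$, using dedicated coordinate blocks in which vertex-label encodings must cancel, yielding $K=\Theta(k^2)$ groups and $d=\Theta(K\log n)$. The only difference is organizational: the paper keeps just the $\binom{k}{2}$ edge groups and enforces consistency via $k(k-2)$ chained bitwise equality checks between edge groups sharing a color class, whereas you add $k$ explicit vertex groups and check each edge group against the two relevant vertex groups --- both work, and your variant cleanly avoids the parity issue you flag.
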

\begin{proof}
	Set $K=\binom{k}{2}$ groups of vectors, each representing valid edges between an $i$-th block and a $j$-th block of vertices in {\sc $k$-Clique}. For each $i \in [k]$, we want to make sure that, the $(k-1)$ vertices chosen from the $i$-th block are all the same one. Thus we need to do $k \cdot (k-2)$ equality checks, each on two $(\log n)$-bit strings. In short, we exploit a new entry to do each bitwise equality check: set the entry to be the unchecked bit in the two vectors involved, and set the entry to be zero in all other vectors. Let the target vector to be $\vec {\mathbf 0}$. Thus all vectors sum up to zero in this entry if and only if the two to-be-checked bits are the same. The produced {\sc $K$-VectorSum} instance has parameter $K=\Theta(k^2)$ and dimension $d=\Theta(k^2 \log n)$.
\end{proof}

\begin{thm}
	Assuming {\sf ETH}, {\sc $k$-VectorSum} with $\mathbb F=\mathbb F_2$ and $d=\Theta(k \log n)$ can not be solved in $n^{o(k)}$ time.
\end{thm}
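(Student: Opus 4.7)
The plan is to reduce the sparsified form of 3SAT directly to {\sc $k$-VectorSum} in a \emph{parameter-preserving} manner, then invoke {\sf ETH}. The key reason one cannot simply reuse Theorem~\ref{thm:kclique_to_kvectorsum} is that routing through {\sc $k$-Clique} inflates the parameter from $k$ to $\Theta(k^2)$, which turns {\sf ETH}'s $n^{\Omega(k)}$ lower bound on {\sc $k$-Clique} into only an $n^{\Omega(\sqrt{k})}$ bound on {\sc $k$-VectorSum}. Going directly from 3SAT avoids this loss because the pairwise consistency test between two partial assignments is a single XOR constraint over $\mathbb{F}_2$, and is thus perfectly suited to be read off from one coordinate of the vector sum.

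Concretely, start from a 3SAT formula $\varphi$ on $N$ variables with $m=O(N)$ clauses in which every variable appears in at most three clauses (precisely the form of 3SAT that {\sf ETH} rules out in $2^{o(N)}$ time). Partition the clauses into $k$ blocks $C^{(1)},\ldots,C^{(k)}$ of size $m/k$ each, and let $X^{(i)}$ be the set of variables occurring in $C^{(i)}$, so $|X^{(i)}|\le 3m/k$. Define $V_i$ as the set of assignments $\sigma_i:X^{(i)}\to\{0,1\}$ satisfying every clause in $C^{(i)}$; these can be enumerated in $2^{O(N/k)}$ time, and $|V_i|\le n:=2^{3m/k}$, so $k\log n=\Theta(N)$. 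This will give the $k$ groups of the {\sc $k$-VectorSum} instance, so the parameter is preserved exactly.

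For every unordered pair $\{i,j\}\subseteq[k]$ and every shared variable $x\in X^{(i)}\cap X^{(j)}$, allocate one coordinate $e(\{i,j\},x)$. Since each variable appears in at most three clauses and hence in at most three blocks, the total number of coordinates is $d=\sum_x\binom{|B(x)|}{2}=O(N)=O(k\log n)$, matching the prescribed dimension bound. For $\sigma\in V_i$, encode it as the vector $\vec{\sigma}\in\mathbb{F}_2^d$ whose value at $e(\{i,j\},x)$ is $\sigma(x)$ whenever $x\in X^{(j)}$ as well, and $0$ elsewhere; set the target to $\vec{0}$. Then $\sum_{i=1}^{k}\vec{\sigma}_i=\vec{0}$ iff $\sigma_i(x)=\sigma_j(x)$ for every pair of blocks and every shared variable, which is equivalent to the $\sigma_i$'s gluing together into a well-defined satisfying assignment of $\varphi$.

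Finally, a hypothetical $n^{o(k)}$-time algorithm for {\sc $k$-VectorSum} composed with this reduction (which itself runs in $2^{O(N/k)}\cdot\text{poly}(N)$ time) would decide $\varphi$ in time $2^{o(N)}$ by taking any $k=\omega(1)$, contradicting {\sf ETH}. Because there is no gap to maintain, no alphabet to control, and no gadget analysis beyond the one-bit consistency check, I do not anticipate a real obstacle; the only point that needs attention is keeping $d$ at $O(k\log n)$, which is precisely why the sparsified form of {\sf ETH}, forcing every variable into $O(1)$ blocks, is indispensable.
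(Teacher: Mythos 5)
Your proof is correct and follows essentially the same route as the paper: partition the clauses of a sparsified 3SAT instance into $k$ blocks, enumerate satisfying partial assignments per block, and use one $\mathbb{F}_2$-coordinate per shared-variable consistency check, relying on the bounded occurrence of each variable to keep $d=O(N)=O(k\log n)$. The only cosmetic difference is that you check all pairs of occurrences of a shared variable while the paper chains consecutive occurrences; both give the same parameters.
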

\begin{proof}
	Divide the clauses into $k$ equal-sized groups. Enumerate satisfying partial assignments of each group of clauses, then we want to check the consistency of those partial assignments. Note we can assume that each variable only appears in at most 3 clauses, thus we only need to do at most 2 pair-wise equality checks (between the first and the second appearances, and between the second and the third). The equality check step is the same as that in the proof of Theorem \ref{thm:kclique_to_kvectorsum}, and is omitted here.
	
	The produced {\sc $k$-VectorSum} instance has size $N=2^{\Theta(n/k)}$ and the dimension is $d=O(n)=O(k \log N)$. Assuming {\sf ETH}, this can not be solved in $N^{o(k)}$ time.
\end{proof}
Note that the $\mathbb F_2$ can be replaced by any finite field, as long as we slightly adjust the value of an entry: both 0 if $x=0$; $c$ in the first appearance and $-c$ in the second appearance for some constant $c\ne 0$ if $x=1$.

\subsection{Gap Producing via Hadamard Codes}
In this subsection we briefly introduce how Lin \cite{Lin21} rules out constant FPT-approximations of {\sc $k$-Clique} under {\sf W[1]$\neq$FPT}.

The most essential step in the reduction is to combine {\sc $k$-VectorSum}, whose {\sf W[1]}-hardness was shown in Theorem \ref{thm:kclique_to_kvectorsum}, with Hadamard codes to create a gap. The technique is very similar to which people used in proving weakened PCP theorem, namely, in proving {\sf NP} $\subseteq$ {\sf PCP}$(\text{poly(n),1})$ \cite{ALM+98}.

Here follows the definition of Hadamard Code, and its two important properties.

\begin{defn}[Walsh-Hadamard Code]
	For two strings $x,y \in \{0,1\}^n$, define $\langle x,y\rangle=\sum_{i=1}^n x_iy_i \pmod 2$. The Walsh-Hadamard Code is the function $f:\{0,1\}^n \to \{0,1\}^{2^n}$ that maps every string $x \in \{0,1\}^n$ into the string $z \in \{0,1\}^{2^n}$ such that $z_y = \langle x,y \rangle$ for every $y \in \{0,1\}^n$.
\end{defn}

\begin{enumerate}
	\item (Linearity Testing, \cite{BLR93}) Let $g$ be a function mapping from $\{0,1\}^n$ to $\{0,1\}$. If it can pass $(1-\delta)$ fraction of tests $g(x)+g(x') = g(x+x'), x,x' \in \{0,1\}^n$, then $g$ is at least $(1-\delta)$-close to a true linear function $\tilde g:\{0,1\}^n \to \{0,1\}$, which can also be parsed as a Hadamard codeword. By setting $0 \le \delta <\frac{1}{4}$, this codeword is unique since the relative distance of Hadamard Code is exactly $\frac{1}{2}$.
	\item (Locally Decodable Property) Suppose $g$ is $(1-\delta)$-close to a Hadamard codeword $f(x)$ for some $x \in \{0,1\}^n$, then for any $y \in \{0,1\}^n$, we can recover $(f(x))_y$ probabilistically by querying only two positions of $g$. Namely, sample $y' \in \{0,1\}^n$ uniformly at random, and output $g(y+y')+g(y')$. This succeeds with probability at least $1-2\delta$ by a simple union bound.
\end{enumerate}

Given an {\sc $k$-VectorSum} instance $(V_1,\ldots,V_k,\vec t)$, we build a {\sc CSP} instance on variable set $X=\{x_{\vec a_1,\ldots,\vec a_k} : \vec a_1,\ldots,\vec a_k\in\mathbb{F}^d\}$. Let $\vec{v}_1 \in V_1,\ldots,\vec{v}_k \in V_k$ be a solution that sums up to $\vec t$ in the yes case, each variable $x_{\vec a_1,\ldots,\vec a_k}$ is supposed to take the value $\sum_{i \in [k]} \langle \vec a_i, \vec v_i\rangle$. Here we are actually concatenating the $k$ solution vectors into one long vector $\vec v \in \mathbb F^{kd}$, and the concatenation of variables is supposed to be the Hadamard codeword of $\vec v$.

There are three types of tests we want to make:

\begin{itemize}
	\item (T1) $\forall \vec a_1,\ldots,\vec a_k,\vec b_1,\ldots\vec b_k \in \mathbb F^d$, test whether $x_{\vec a_1,\ldots,\vec a_k}+x_{\vec b_1,\ldots\vec b_k} = x_{\vec a_1 + \vec b_1, \ldots, \vec a_k + \vec b_k}$.
	\item (T2) $\forall i \in [k],\forall \vec a_1,\ldots,\vec a_k,\vec a \in \mathbb F^d$, test whether $x_{\vec a_1,\ldots,\vec a_i+\vec a,\ldots,\vec a_k}-x_{\vec a_1,\ldots,\vec a_k}=\langle \vec a,\vec v\rangle$ for some $v \in V_i$.
	\item (T3) $\forall \vec a_1,\ldots,\vec a_k,\vec a \in \mathbb F^d$, test whether $x_{\vec a_1+\vec a,\ldots,\vec a_k+\vec a}-x_{\vec a_1,\ldots,\vec a_k}=\langle \vec a,\vec t \rangle$.
\end{itemize}

By linearity testing, if an assignment to $X$ satisfies $(1-\delta)$-fraction of constraints in (T1), then $X$ is at least $(1-\delta)$-close to a true Hadamard codeword $f(\vec u)$, $\vec u \in \mathbb F^{kd}$. The constraints (T2) and (T3) use locally decodable properties to recover $f(\vec u)_{(\vec 0,\ldots,\vec a,\ldots,\vec 0)}, f(\vec u)_{(\vec a,\ldots,\vec a)}$, and use them to check whether $\vec u$ indeed indicates a satisfying solution of our {\sc $k$-VectorSum} instance.

After that, we use a slightly modified FGLSS reduction to build an $2^{\text{poly}(k)\cdot d}${\sc-clique} instance. We build a group for each variable indicating its possible values, and a group for each (T1) test. A variable vertex and a test vertex are linked either if they are consistent, or the test is irrelevant to the variable. Two test vertices are linked in a similar way accordingly. Two variable vertices are linked either if the values specified by them pass the (T2) and (T3) tests, or there are no such tests between them. Therefore, if there is a clique whose size is at least $(1-\varepsilon)$ times the maximum, the following conditions hold:
\begin{enumerate}
	\item A constant fraction of (T1) tests are passed.
	\item For a constant fraction of variables, all (T2) and (T3) tests between them are passed.
\end{enumerate}

This completes the reduction. 

There are still some technical details. For example, in this reduction the number of vertices is $2^{\text{poly}(k)\cdot d}=n^{\text{poly}(k)}$, which is too large. \cite{Lin21} sampled some random matrices to handle this. Details are omitted here.

\clearpage
\section{\sc $k$-SetCover}
\label{sec:ksetCover}

{\sc SetCover}, which is equivalent to the {\sc Dominating Set} problem, is one of the fundamental problems in computational complexity. A simple greedy algorithm yield a $(\ln n)$-approximation of this problem. On the opposite side, it was shown that $(1-\varepsilon)\ln n$-approximation for this problem is {\sf NP}-hard for every $\varepsilon>0$ \cite{DS14}. Thus, its approximability in the {\sf NP} regime has been completely settled.

In the parameterized regime, 
{\sc $k$-SetCover} is the complete problem of {\sf W[2]}, and does not admit $n^{o(k)}$ algorithms assuming {\sf ETH} \cite{CHKX06b}, not even $n^{k-\varepsilon}$ algorithms assuming {\sf SETH} \cite{PW10}. Hardness of approximation of {\sc $k$-SetCover} in FPT time was studied by \cite{CL16,CCK+17,KLM19,Lin19}, and currently based on {\sf Gap-ETH}, {\sf ETH}, {\sf W[1]$\neq$FPT} or {\sf $k$-SUM Hypothesis}, {\sc $k$-SetCover} is hard to approximate to within a $(\log n)^{\frac{1}{\text{poly}(k)}}$ factor. In one direction, we wonder if this $(\log n)^{\frac{1}{\text{poly}(k)}}$ can be further improved to $\log n$, or it is already tight. In the other direction, it is also worth questioning whether the total FPT inapproximability of {\sc $k$-SetCover} can be based on the weaker assumption {\sf W[2]$\neq$FPT}.

Current state-of-the-art inapproximability results of {\sc $k$-SetCover} are reached by two contrasting methods, namely, \textit{Distributed PCP Framework} \cite{KLM19} and \textit{Threshold Graph Composition} \cite{Lin19}, which we will introduce in Section \ref{DPCPF} and \ref{SetCover-TGC}, respectively. See the following table for an overview of results.

\begin{center}
\begin{tabular}{c|c|c|c}
Complexity Assumption & Inapproximability Ratio & Time Lower Bound   & Reference   \\ \hline
\multirow{2}{*}{{\sf W[1]$\neq$FPT}} & $(\log n)^{\varepsilon(k)=o(1)}$ & \multirow{2}{*}{$f(k)\cdot \text{poly}(n)$} & \cite{Lin19} \\
	& $(\log n)^{1/\text{poly}(k)}$ & & \cite{KLM19} \\ \hline
\multirow{2}{*}{{\sf ETH}} & $\left(\frac{\log n}{\log \log n}\right)^{1/k}$ & \multirow{2}{*}{$f(k)\cdot n^{\Omega(k)}$} & \cite{Lin19} \\
	& $(\log n)^{1/\text{poly}(k)}$ & & \cite{KLM19} \\ \hline
\multirow{2}{*}{{\sf SETH}} & $\left(\frac{\log n}{\log \log n}\right)^{1/k}$ & \multirow{2}{*}{$f(k)\cdot n^{k-\varepsilon}$} & \cite{Lin19} \\
	& $(\log n)^{1/\text{poly}(k)}$ & & \cite{KLM19} \\ \hline
\multirow{2}{*}{{\sf $k$-SUM Hypothesis}} & $\left(\frac{\log n}{\log \log n}\right)^{1/k}$ & \multirow{2}{*}{$f(k)\cdot n^{\lceil k/2\rceil-\varepsilon}$} & \cite{Lin19} \\
	& $(\log n)^{1/\text{poly}(k)}$ & & \cite{KLM19} \\ \hline
\end{tabular}
\end{center}

As for the coloring, the colored version and uncolored version of (exact) {\sc $k$-SetCover} are also equivalent, since we can add $k$ elements in the universe to ensure there is a set from each group, to reduce a colored version to an uncolored version. In the other direction, taking $k$ copies of the sets also works, because choosing replicated sets does not contribute. In the approximating sense, since it is a minimization problem, in the soundness case when solutions of size $\le g(k) \cdot k$ are ruled out, it always means we can't find such many sets to cover the universe even when choosing from the same set is allowed. Thus there is no need to specify the coloring.

\subsection{Hypercube Partition System}
We first introduce hypercube partition system, which is a powerful tool used in the reduction from {\sc MinLabel} to {\sc $k$-SetCover} \cite{CCK+17,KLM19}, and in creating the gap instance of {\sc $k$-SetCover} \cite{Lin19}.

\begin{defn}[Hypercube Partition System]
	The $(\kappa,\rho)$-hypercube partition system consists of the universe $\mathcal M$ and a collection of subset $\{P_{x,y}\}_{x \in [\rho],y \in [\kappa]}$ where $\mathcal M=[\kappa]^\rho$ and $P_{x,y}=\{z \in \mathcal M:z_x=y\}$.
\end{defn}

The universe $\mathcal M$ consists of all functions from $[\rho]$ to $[\kappa]$, and is of size $\kappa^\rho$. Each subset $P_{x,y}$ ($x \in [\rho],y \in [\kappa]$) consists of all functions mapping $x$ to $y$. It can be observed that one can cover the universe by picking all $\kappa$ subsets from some row $x \in [\rho]$, and this is the only way to cover the universe. In other words, even if we include $\kappa-1$ subsets from every row, it is not possible to cover the universe.

\subsection{Reduction from {\sc MinLabel}}
\label{MinLabel_to_SetCover}

Now we show how to reduce {\sc MinLabel} to {\sc SetCover}. This reduction preserves gap, but significantly increases the instance size.

Given a {\sc MinLabel} instance $\Gamma$ with $\ell$ left super-nodes $U_1 , \ldots , U_\ell$ and $h$ right super-nodes $W_1, \ldots, W_h$, we build a {\sc SetCover} instance as follows. Take $\ell$ different copies of $(h,|\Sigma_U|)$-hypercube partition system and set the universe to be the union of $\ell$ universes. Each set in {\sc SetCover} corresponds to a right vertex in {\sc MinLabel}. For a set $S_v$ associated to a right vertex $v \in W_j$ and for a left vertex $u \in U_i$, if there is an edge $(u,v)$, we include $P_{u,j}$ in set $S_v$. 

In order to see there is a one-to-one mapping from a solution of $\Gamma$ to a solution of the new {\sc SetCover} instance, note that for a left vertex $u \in U_i$, if a right multi-labeling covers $u$, by picking corresponding sets we have $P_{u,j}$ for all $j \in [h]$. Moreover, the only way to cover the universe is to include all $P_{u,j}$ for some row indexed by $u$, so a valid {\sc SetCover} solution must contain  sets associated to at least one neighbor in each right super-nodes for some specific left vertex $u$. The same argument applies to each of the $\ell$ left super-nodes, because we have a different copy of the hypercube partition system for each of them.

One important thing about this reduction is that the instance size is blowed up to $\ell \cdot h^{|\Sigma_U|}$, where $h$ is the solution size of yes instance (and also the parameter of {\sc $k$-SetCover}). Thus, in order to make it $\le f(k)n^{O(1)}$, the left alphabet size can not exceed $\frac{\log n}{\log k}$. Following the hardness of {\sc MinLabel} (Theorem \ref{MinLabel}) and letting $1/\gamma=(\log n)^{O(1)}$, {\sc $k$-SetCover} is hard to approximate to a $(\log n)^{O(\frac{1}{k})}$ factor assuming {\sf Gap-ETH}.

\subsection{Gap Producing via Distributed PCP}
\label{DPCPF}

In this section we introduce the \textit{Distributed PCP Framework}. This framework was first proposed by Abbound et al. \cite{ARW17}, and later used by Karthik et al. to rule out FPT approximation algorithms for {\sc $k$-SetCover} \cite{KLM19}. The interesting part of \cite{KLM19} is to obtain hardness of {\sc MaxCover} with specific parameters, while hardness of {\sc MinLabel} and {\sc $k$-SetCover} directly follow from reductions in \cite{CCK+17} (see Theorem \ref{MinLabel} and Section \ref{MinLabel_to_SetCover} respectively). 

At a high level, in this framework, one first rewrites the problem related to the hypothesis as a communication problem, then derives a \textit{Simultaneous Message Protocol} for this problem and extracts an instance of {\sc MaxCover} from the transcript of the protocol. 

Due to space limitations, we only introduce their {\sf W[1]} and {\sf ETH} results to give an overview of their methods. The ideas in {\sf SETH} and {\sf $k$-SUM Hypothesis} results are similar, except that they involve more complicated error correcting codes and protocols.

The similarity between {\sf W[1]$\neq$FPT} and {\sf ETH} is that they both often involve agreement tests (in other words, equality tests). Starting from {\sf W[1]$\neq$FPT}, one may want to set $K=\binom{k}{2}$ groups, each containing $N=|E|$ elements representing valid edges. The goal is to pick an edge from each group such that the label of end points (which is of $\log N$ bits length) are consistent. {\sf W[1]$\neq$FPT} states that this problem does not admit an $f(K)N^{O(1)}$ algorithm. Similarly, starting from {\sf ETH}, one may also want to divide the clauses into $K$ groups, each containing at most $N=2^{\Theta(n/K)}$ partial satisfying assignments for those clauses. The goal is also to pick an assignment from each group such that the values on each variables (which is of $n=\Theta(K \log N)$ bits length in total) are consistent. {\sf ETH} states that this cannot be done in $N^{o(K)}$ time.

We can think of the agreement test problem as a communication problem: there are $K$ players, each receiving an element from the corresponding group. They want to collaborate nicely to decide whether their elements in hand ``agree'' or not. To achieve this, they use a specific communication protocol called \textit{Simultaneous Message Protocol}, which was introduced by Yao \cite{Yao79}:

\begin{defn}[Simultaneous Message Protocol]
	We say $\pi$ is a $(r,\ell,s)$-efficient protocol if the following holds:
	\begin{itemize}
		\item The protocol is one-round with public randomness. The following actions happen sequentially:
			\begin{enumerate}
				\item The players receive their inputs.
				\item The players and the referee jointly toss $r$ random coins.
				\item Each player on seeing the randomness deterministically sends an $\ell$-bit message to the referee.
				\item Based on the randomness and the $K\cdot \ell$ bits sent from the players, the referee outputs accept or reject.
			\end{enumerate}
		\item The protocol has completeness 1 and soundness $s$, i.e.,
			\begin{itemize}
				\item If their inputs indeed agree, then the referee always accepts regardless of randomness.
				\item Otherwise, the referee accepts with probability at most $s$.
			\end{itemize}
	\end{itemize}
\end{defn}

The full version of SMP protocol also admits $w$ bits of advice. In the contexts of {\sf W[1]$\neq$FPT} and {\sf ETH} here, we do not need this.

Note that by repeating the protocol $z$ times, each time using fresh randomness, we can derive a $(z\cdot r,z \cdot \ell,s^z)$-efficient protocol from an $(r,\ell,s)$-efficient protocol. 

Next we will see how to use {\sc MaxCover} to simulate an SMP protocol. Then the hardness of the starting problem ({\sc $k$-Clique} or {\sc 3SAT}) and the existence of an efficient protocol for it will lead to hardness of {\sc MaxCover}. 

\begin{thm}[Theorem 5.2 in \cite{KLM19}, slightly simplified]
	An instance $\Pi$ of a ({\sc $k$-Clique} or {\sc 3SAT}) problem which admits an $(r,\ell,s)$-efficient $K$-player SMP protocol can be reduced to a {\sc MaxCover} instance $\Gamma$ as follows:
	\begin{itemize}
		\item The reduction runs in time $2^{r+K\cdot \ell}\cdot \text{poly}(N,K)$.
		\item $\Gamma$ has $K$ right super-nodes of size at most $N$ each.
		\item $\Gamma$ has $2^{r}$ left super-nodes of size at most $2^{K \cdot \ell}$ each.
		\item If $\Pi$ is a YES instance, then {\sc MaxCover}$=1$.
		\item If $\Pi$ is a NO instasnce, then {\sc MaxCover}$\le s$.
	\end{itemize}
\end{thm}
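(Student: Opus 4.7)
The plan is to encode the protocol's transcript structure directly into a \textsc{MaxCover} instance, using right super-nodes to represent the players' inputs and left super-nodes to represent the random coin outcomes. More precisely, I would set $W_i$, for each player $i \in [K]$, to be the set of possible inputs to that player in $\Pi$, which has size at most $N$. For each random string $\sigma \in \{0,1\}^r$, I would set $U_\sigma$ to be the set of all accepting transcripts under randomness $\sigma$, i.e., all tuples $(m_1,\ldots,m_K) \in (\{0,1\}^\ell)^K$ on which the referee accepts when the random coins land on $\sigma$. This gives $2^r$ left super-nodes, each of size at most $2^{K\ell}$. Enumerating these super-nodes and testing referee acceptance for each candidate transcript takes $2^{r + K\ell}\cdot\text{poly}(N,K)$ time, matching the stated runtime bound.

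For edges, I would connect $x_i \in W_i$ to $(m_1,\ldots,m_K) \in U_\sigma$ iff $m_i$ is exactly the $\ell$-bit message that player $i$ deterministically sends when it receives input $x_i$ and observes randomness $\sigma$. Thus a labeling $S \subseteq W$ that selects one input per player covers $U_\sigma$ iff the $K$-tuple of messages the players send on those inputs and randomness $\sigma$ is actually accepted by the referee. In the graph-theoretic sense, there is at most one vertex in each $U_\sigma$ compatible with a given labeling, and its existence is precisely the event that the referee accepts on $(\text{those inputs}, \sigma)$.

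The completeness case is then immediate: if $\Pi$ is a YES instance, the players' ``true'' agreeing inputs $x_1^\star,\ldots,x_K^\star$ make the referee accept for every $\sigma$ by the perfect completeness of the protocol, so the labeling $\{x_i^\star\}_{i\in[K]}$ covers every left super-node. For soundness, I would use the protocol's soundness guarantee in the contrapositive: for any fixed choice of inputs $x_1,\ldots,x_K$ drawn from a NO instance, the probability over $\sigma$ that the referee accepts is at most $s$; equivalently, the fraction of $\sigma \in \{0,1\}^r$ whose super-node $U_\sigma$ is covered by the labeling $\{x_i\}$ is at most $s$. Since every labeling of $W$ arises this way, $\textsc{MaxCover}(\Gamma)\le s$.

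The construction itself is routine; the only subtle point is making sure the edges precisely implement the ``$m_i$ equals the message sent by player $i$ on $(x_i,\sigma)$'' relation, so that covering $U_\sigma$ is equivalent to referee acceptance on that $\sigma$. This equivalence, which collapses the combinatorial covering condition to the probabilistic acceptance condition, is what makes the completeness/soundness gap of the protocol transfer verbatim to the \textsc{MaxCover} gap; I do not foresee any real obstacle beyond bookkeeping.
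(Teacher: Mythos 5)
Your construction is identical to the paper's: right super-nodes are the players' input groups, left super-nodes are indexed by random strings and contain the accepting message tuples, and edges encode the condition that $m_i$ is the message player $i$ sends on $(x_i,\sigma)$. The paper omits the completeness/soundness verification as straightforward; your version spells it out correctly, so there is nothing to fix.
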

\begin{proof}
	Here is the construction: 
	\begin{itemize}
		\item Each right super-node corresponds to the group which a player receives an input from.
		\item Each left super-node corresponds to a random string $\gamma\in \{0,1\}^r$. The left super-node $U_\gamma$ contains one node for each  possible accepting messages from the $K$ players, i.e., each vertex in $U_\gamma$ corresponds to $(m_1,\ldots,m_K)\in(\{0,1\}^\ell)^K$ where in the protocol the referee accepts on seeing randomness $\gamma$ and messages $(m_1,\ldots,m_K)$.
		\item We add an edge between a right vertex $x \in W_j$ and a left vertex $(m_1,\ldots,m_K)\in U_\gamma$ if $m_j$ is equal to the message that player $j$ sends on an input $x$ and randomness $\gamma$ in the protocol.
	\end{itemize}
	Detailed proofs of the desired properties are omitted here since they are rather straightforward.
\end{proof}

In the last step, we need to derive an efficient SMP protocol for {\sc $k$-Clique} and {\sc 3SAT}. Directly sending their inputs (which is of $\ge\log N$ bits length) does not seem to be a good idea because it would make the size of {\sc MaxCover} to be $2^{r+K \cdot \ell}=2^{\Omega(K)\log N}=N^{\Omega(K)}$, which is too large. What's more, since the further reduction from {\sc MaxCover} to {\sc MinLabel} then to {\sc $k$-SetCover} will introduce a $K^{|\Sigma_U|}$ blow-up, we even need $|\Sigma_U|=2^{K \cdot \ell} \le \frac{\log N}{\log K}$.

Fortunately, this can be done via a simple error correcting code called \textit{good code} which has constant rate and constant relative distance. To check the consistency of their inputs, we only need to check the equality of one random bit of their codes. The randomness is used to specify the index of that bit. This way we can have $r=O(\log \log N), \ell=O(1), s=O(1)$. Within the bound that $\ell \le \frac{1}{K}(\log \log N-\log \log K)$, we can repeat the protocol $z=O(\frac{1}{K} \log \log N)$ times, leading to a soundness parameter $s=O(1)^z=O((\log N)^{\frac{1}{K}})$. After the reduction to {\sc MinLabel}, the $\varepsilon$ in gap would become $\varepsilon^{\frac{1}{K}}$.

Along such a long way we finally reach a $(\log n)^{\frac{1}{\text{poly}(k)}}$ inapproximability ratio for {\sc $k$-SetCover} with FPT time lower bound under {\sf W[1]$\neq$FPT} and $n^{o(k)}$ time lower bound under {\sf ETH}.

\subsection{Gap Producing via Threshold Graph Composition}
\label{SetCover-TGC}

In this subsection we introduce how to obtain $\left(\frac{\log n}{\log \log n}\right)^{1/k}$ inapproximability of {\sc $k$-SetCover} via threshold graph composition technique.

In general, this technique transforms an {\sc $k$-SetCover} instance with small universe (typically $|U|=\Theta(\log n)$) still to an instance of {\sc $k$-SetCover}, increasing the size of universe to $O(n)$ while creating a gap. In the YES case, the number of sets needed to cover the universe is still $k$, but in the NO case it becomes $h \gg k$, where $h$ is determined by the threshold graph.

Given an {\sc $k$-SetCover} instance $\Gamma=(\mathcal S,U)$, we need a bipartite threshold graph $G=(A \dot\cup B,E)$ with the following properties:
\begin{enumerate}
	\item $A$ is not divided. $B$ is divided into $\ell$ groups: $B=B_1 \dot\cup\ldots\dot\cup B_\ell$, where $\ell$ is arbitrary.
	\item $|A|=n, |B_i| \le \frac{\log n}{\log |U|}, \forall i \in [\ell]$.
	\item For any $k$ vertices $a_1,\ldots, a_k \in A$ and for any $i \in [\ell]$, there is at least one common neighbor of $\{a_1,\ldots,a_k\}$ in $B_i$.
	\item For any $X \subseteq A$, if for any $i \in [\ell]$, there is at least one vertex in $B_i$, which is a common neighbor of at least $k+1$ vertices in $X$, then $|X|\ge h$.
\end{enumerate}

We compose the original {\sc $k$-SetCover} instance $\Gamma=(\mathcal S,U)$ with this threshold graph $G$ to produce an instance $\Gamma'=(\mathcal S',U')$ as follows. 

\begin{itemize}
	\item $|\mathcal S'|=|\mathcal S|$. For each $S \in \mathcal S$ we associate a new set $S'\in \mathcal S'$. We also associate a vertex in $A$ (the left side of the threshold graph) to each set $S' \in \mathcal S'$. 
	\item $\Gamma$ consists of $\ell$ hypercube partition systems, one for each $B_i$ (the right side of the threshold graph). The $i$-th partition system has $|B_i|$ rows and $|U|$ columns, thus is of size $|U|^{|B_i|}$.
	\item For any $i \in [\ell], x \in B_i, y \in U$, subset $P_{x,y}$ is included in a set $S' \in \mathcal S'$ if and only if:
		\begin{enumerate}
			\item $y \in S$, i.e., set $S$ can cover $y$ in the original instance $\Gamma$.
			\item There is an edge between the vertex associated to $S'$ and vertex $x$ in the threshold graph $G$.
		\end{enumerate}
\end{itemize}

As shown above, each row in the partition system corresponds to a vertex in $B_i$, and each column corresponds to an element in $U$. According to the properties of partition system, in order to cover the universe, we must pick all $|U|$ subsets in some specific row. This, together with our construction, means there is a vertex $x \in B_i$ such that sets correspond to its neighbors in $G$ can cover $U$. Since the $\ell$ hypercube partition systems are independent, this holds for each $B_i, i \in [\ell]$.

In the YES case, $k$ sets are enough to cover $U$, and there is at least one common neighbor of them in every $B_i, i \in [\ell]$. Thus the answer to the new instance is still $k$.

In the NO case, at least $k+1$ sets are need to cover $U$. Consider any solution $X \subseteq \mathcal S'$ of $\Gamma'$, we know that in each group $B_i$, there is a vertex $x \in B_i$ such that $|\mathcal N(x) \cap X| \ge k+1$ (because they form a solution in $\Gamma$). Therefore, by property 4 of threshold graph, $|X| \ge h$ as desired.

The last step is to construct a threshold graph the desired properties. \cite{Lin19} used a specific combinatorial object called universal sets to construct it. However, the graph in Section \ref{TGC-KN21} with proper parameters also suffices, and is simpler. 

The required property is closely related to the collision property in the construction of Section \ref{TGC-KN21}. In fact, the gap $h$ here is just the collision number $Col(C)$ of the error correcting code. Thus, we want codes with large collision numbers.

Let the alphabet of the code be $\Sigma$, then it's easy to see the collision number cannot be larger than $|\Sigma|+1$, since such many strings must collide in every position by pigeon principle. However, this upper bound can be reached by codes constructed from \textit{perfect hash family}.

\begin{defn}[Perfect Hash Family]
	For every $N,\ell\in \mathbb N$ and $\Sigma$, we say that $H:=\{h_i:[N] \to \Sigma | i \in [\ell]\}$ is a $[N,\ell]_{|\Sigma|}$-perfect hash family if for every subset $T\subseteq [N]$ of size $|\Sigma|$, there exists an $i\in [\ell]$ such that
	$$\forall x ,y \in T, x \ne y, h_i(x) \ne h_i(y)$$
\end{defn}

Think of an $[N,\ell]_{|\Sigma|}$-perfect hash family as an $N \times \ell$ matrix, where each column represents a hash function. Then the property is that for any $\le |\Sigma|$ rows, there is a column with no collisions (on these rows). Regard each row as a codeword of an error correcting code from $\Sigma^{\log_{|\Sigma|} N} \to \Sigma^\ell$, then the collision number of this error correcting code is $>|\Sigma|$, thus exactly $|\Sigma|+1$ as desired.

\begin{lema}[Alon et al. \cite{AYZ95} ]
	For every $N,|\Sigma| \in \mathbb N$ there is a $[N,2^{O(|\Sigma|)}\cdot \log N]_{|\Sigma|}$-perfect hash family that can be computed in $\tilde O_{|\Sigma|}(N)$ time.
\end{lema}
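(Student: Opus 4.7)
The plan is to combine a probabilistic existence proof with a two-level derandomization that separates the dependence on $N$ from the dependence on $|\Sigma|$. For existence, note that a uniformly random $h\colon [N]\to\Sigma$ is injective on a fixed $T\subseteq[N]$ with $|T|=|\Sigma|$ with probability exactly $|\Sigma|!/|\Sigma|^{|\Sigma|}\ge e^{-|\Sigma|}$. Choosing $\ell = C\cdot e^{|\Sigma|}\cdot |\Sigma|\log N$ independent random functions and applying a union bound over the at most $N^{|\Sigma|}$ candidate sets $T$ bounds the failure probability by $N^{|\Sigma|}\cdot(1-e^{-|\Sigma|})^{\ell} \le N^{|\Sigma|}\cdot e^{-\ell\cdot e^{-|\Sigma|}} < 1$ for a sufficiently large constant $C$. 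Thus a family of size $2^{O(|\Sigma|)}\log N$ exists, matching the target combinatorial bound.

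To meet the running-time requirement $\tilde O_{|\Sigma|}(N)$, I would use a two-step construction. First, build a \emph{splitter} family $\mathcal G$ of hash functions $g\colon [N]\to[|\Sigma|^{2}]$ of size $\mathrm{poly}(|\Sigma|)\cdot \log N$ such that every $T\subseteq[N]$ with $|T|=|\Sigma|$ is injectively hashed by some $g\in\mathcal G$; this can be produced via $|\Sigma|$-wise independent families or by the method of conditional expectations over a small-bias sample space, in time $\tilde O_{|\Sigma|}(N)$. Second, since the intermediate universe has size $|\Sigma|^{2}$, precompute by exhaustive search a perfect hash family $\mathcal F$ from $[|\Sigma|^{2}]$ to $\Sigma$ of size $2^{O(|\Sigma|)}$; the probabilistic bound of the previous paragraph applied with $N$ replaced by $|\Sigma|^{2}$ guarantees such $\mathcal F$ exists, and the search takes time depending only on $|\Sigma|$. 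The composition $\{f\circ g:f\in\mathcal F,\,g\in\mathcal G\}$ is then a perfect hash family from $[N]$ to $\Sigma$ of the desired size $2^{O(|\Sigma|)}\log N$, and can be evaluated at any point in $O_{|\Sigma|}(1)$ time.

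The main obstacle is producing the splitter family $\mathcal G$ with the right combinatorial size \emph{and} a genuinely near-linear running time in $N$: naive $|\Sigma|$-wise independent families yield the combinatorial bound but typically cost a polynomial overhead in $N$ for enumeration or derandomization, so care is needed to bring the cost down to $\tilde O_{|\Sigma|}(N)$. I would appeal to the explicit splitter construction of Alon, Yuster, and Zwick, which handles exactly this derandomization through a recursive partitioning argument, rather than reproving the entire construction from scratch.
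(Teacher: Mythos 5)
The paper does not prove this lemma; it is imported verbatim from Alon--Yuster--Zwick \cite{AYZ95} as a black box. Your sketch is a faithful reconstruction of the standard argument behind that citation: the probabilistic bound $|\Sigma|!/|\Sigma|^{|\Sigma|}\ge e^{-|\Sigma|}$ plus union bound gives existence of a family of size $2^{O(|\Sigma|)}\log N$, and the two-level composition (split $[N]\to[|\Sigma|^2]$, then exhaustively find a perfect family on the small universe) is exactly how the explicit construction is organized. One small imprecision: for the first level you do not want $|\Sigma|$-wise independence (such families have size polynomial in $N$, not $\mathrm{poly}(|\Sigma|)\log N$); pairwise independence already gives injectivity into $[|\Sigma|^2]$ on any fixed $|\Sigma|$-set with constant probability by the birthday bound, and it is the derandomized selection of $O(|\Sigma|\log N)$ such functions that requires the recursive splitter machinery. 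Since you correctly flag that step as the real obstacle and defer it to \cite{AYZ95} --- the same source the survey cites --- your proposal is consistent with how the lemma is used here.
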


In this threshold graph the size of a right super-node $B_i$ is $|\Sigma|^k$. Assuming $|U|=O(\log n)$, to make $|B_i| \le \frac{\log n}{\log |U|}=\frac{\log n}{\log \log n}$, we need $|\Sigma| \le \left(\frac{\log n}{\log \log n}\right)^{1/k}$ and this is the best gap possible. By the above lemma, the perfect hash family (and the corresponding code) can be constructed efficiently.

Our last step is to show that, assuming different hypothesis, {\sc $k$-SetCover} with universe size $\le \log n$ is still hard. The reduction from {\sc 3SAT} is straightforward: divide the variables into $k$ groups of size $n/k$ each, a group of sets corresponds to $N=2^{n/k}$ possible assignments to those $n/k$ variables, and the universe is just the $m=\Theta(k \log N)$ clauses. {\sf ETH} (respectively, {\sf SETH}) asserts that this instance cannot be solved in $N^{o(k)}$ (respectively, $N^{k-\varepsilon}$) time. Thus by the above threshold graph composition technique, we reach $\left(\frac{\log n}{\log \log n}\right)^{1/k}$-inapproximability of {\sc $k$-SetCover} based on {\sf ETH} and {\sf SETH}.

The reduction from {\sc $k$-Clique} to {\sc $k$-SetCover} is a bit more complicated, as stated in the following theorem. The main idea is from Karthik et al. \cite{KLM19}.
\begin{thm}
	There is an $\text{poly}(n)$ time algorithm, which can reduce an {\sc $k$-Clique} instance to an {\sc $\binom{k}{2}$-SetCover} instance with universe size $\text{poly}(k) \log n$.
\end{thm}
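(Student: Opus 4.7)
The plan is to reduce a colored $k$-Clique instance with vertex classes $V_1,\ldots,V_k$ (each of size at most $n$, so vertices can be identified with $\log n$-bit labels) to a $\binom{k}{2}$-SetCover instance built as follows. For every pair $1\le i<j\le k$ and every edge $(u,v)\in V_i\times V_j$ of the graph, introduce a set $S_{(u,v),(i,j)}$. The universe has two parts: (a) a pair element $p_{ij}$ for each $i<j$, covered exactly by the sets of the form $S_{(\cdot,\cdot),(i,j)}$; and (b) for each $i\in[k]$ and each bit index $b\in[\log n]$, a fresh copy of the $(k-1,2)$-hypercube partition system $H_{i,b}$ (with universe $[k-1]^2$ of size $(k-1)^2$). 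I interpret the two rows of $H_{i,b}$ as the two possible values of the $b$-th bit of a label in $V_i$, and its $k-1$ columns as the possible ranks of the companion position $j\in[k]\setminus\{i\}$. The set $S_{(u,v),(i,j)}$ covers $p_{ij}$ and, for each $b$, the subset $P^{i,b}_{\mathrm{bit}_b(u),\,\mathrm{rank}_i(j)}\subseteq H_{i,b}$, and symmetrically $P^{j,b}_{\mathrm{bit}_b(v),\,\mathrm{rank}_j(i)}\subseteq H_{j,b}$.

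For completeness, given a clique $f(1)\in V_1,\ldots,f(k)\in V_k$, selecting the $\binom{k}{2}$ sets $\{S_{(f(i),f(j)),(i,j)}: i<j\}$ trivially covers all pair elements, and at each $(i,b)$ the $k-1$ incident chosen sets share the bit value $\mathrm{bit}_b(f(i))$, so they all contribute subsets in a single common row of $H_{i,b}$ and together occupy all $k-1$ distinct columns. By the hypercube partition property this covers $H_{i,b}$.

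For soundness, suppose the $k$-Clique instance is a NO-instance and, for contradiction, that some $\binom{k}{2}$ sets cover the universe. The pair elements force at least one chosen set per pair, and since $\binom{k}{2}$ sets are available for $\binom{k}{2}$ pairs, exactly one set per pair is chosen, defining a well-defined edge assignment. If that assignment were consistent at every position (i.e.\ the $k-1$ edges incident to each $i$ agreed on their $V_i$-endpoint), it would constitute a $k$-clique, contradicting the NO assumption. Hence at some $i$ two incident pairs disagree on the $V_i$-vertex they picked, so they disagree in some bit $b$. At $H_{i,b}$ the $k-1$ incident sets then occupy $k-1$ distinct columns (one per rank) but split across the two rows; neither row contains all $k-1$ of its subsets, so by the hypercube partition property $H_{i,b}$ has an uncovered element, a contradiction.

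The universe has size $\binom{k}{2}+k\cdot\log n\cdot(k-1)^2=O(k^3\log n)=\mathrm{poly}(k)\cdot\log n$, and the construction is computable in $\mathrm{poly}(n)$ time. The main obstacle is the design of the consistency gadget itself: one has to choose the hypercube partition parameters as $\kappa=k-1,\ \rho=2$ so that the per-gadget universe $\kappa^\rho=(k-1)^2$ remains polynomial in $k$ (any larger $\rho$ would make it exponential in $k$), and simultaneously the number of ``witnessing'' sets incident to each position equals $\kappa$, which is precisely what forces ``per-position bit agreement'' to coincide with ``all $\kappa$ subsets of a single row being present.''
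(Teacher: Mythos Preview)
Your proof is correct and follows essentially the same construction as the paper: one $(\kappa,\rho)=(k-1,2)$ hypercube partition system per pair $(i,b)\in[k]\times[\log n]$, with rows indexed by the bit value and columns indexed by the companion block, and each edge-set contributing the subset $P_{\text{bit},\text{rank}}$ on both of its endpoints. The only difference is that you additionally include the pair elements $p_{ij}$; these are in fact redundant, since if some pair $(i,j)$ had no chosen set then column $\mathrm{rank}_i(j)$ in every $H_{i,b}$ would be empty in both rows and $H_{i,b}$ would already be uncovered --- but the extra elements do no harm and make your ``exactly one set per pair'' step slightly more direct.
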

\begin{proof}
	Sets in a group $(i,j)$ still represent edges between the $i$-th block and the $j$-th block (in the {\sc $k$-Clique} instance). In order to check the consistency of labels, we need $k \times \log n$ hypercube partition systems, one for each $(i,\ell), i \in [k],\ell \in [\log n]$. The $(i,\ell)$-th partition system is meant to check whether the $\ell$-th bits of labels of vertices in block $i$ are all the same. In an invalid solution, one may pick an edge between the $i$-th and $j$-th blocks, and an edge between the $i$-th and $j'$-th blocks, such that the two vertices (let them be $v_1$ and $v_2$) in the $i$-th block are not the same. In such a case, $v_1$ and $v_2$ must differ in at least one bit, and thus cannot fulfill the requirements in the $(i,\ell)$-th partition system where $\ell$ is the position of that bit.
	
	Specifically, for all $i \in [k],\ell \in [\log n]$, the $(i,\ell)$-th partition system contains 2 rows and $(k-1)$ columns. The choices of rows represent the choices of the bit to be 0 or 1, and the columns test agreement of the $(k-1)$ labels (edges between the $i$-th block and each remaining blocks). For an edge between $v \in V_i$ and $w \in V_j$, we include $P_{v[\ell],j}$ into its set. At last, the {\sc $\binom{k}{2}$-SetCover} instance is the union of those $k \cdot \log n$ hypercube partition systems.
	
	The instance size is $\text{poly}(n)$ since there are that many edges, while the universe size is $k \cdot \log n\cdot (k-1)^2=\text{poly}(k)\cdot \log n$.
\end{proof}

The {\sf W[1]}-hardness of {\sc $k$-SetCover} then follows. It is worth noting that in such FPT reductions, the parameter $k'$ can be arbitrarily amplified as long as it is a function of $k$. Assuming {\sf W[1]$\neq$FPT}, {\sc $k$-SetCover} is hard to approximate to within a factor of $\left(\frac{\log n}{\log \log n}\right)^{1/\binom{k}{2}}$, then for any function $\varepsilon(k)$ which is $o(1)$ as $k$ goes to infinity, take large enough $k'$ such that $\varepsilon(k')<1/\binom{k}{2}$, then for large enough $n$ we have $(\log n)^{\varepsilon(k')}<\left(\frac{\log n}{\log \log n}\right)^{1/\binom{k}{2}}$, which means {\sc $k'$-SetCover} (by padding the parameter from $k$ to $k'$) cannot be approximated to a factor of $(\log n)^{\varepsilon(k')}$ in FPT time.

Instead of introducing their {\sf $k$-SUM Hypothesis} result here, we want to make some comments on this technique. Note that the maximum size of a right super-node in the threshold graph is $\frac{\log n}{\log |U|}$. Thus when $|U|$ is not as small as $\log n$, it may still be possible to obtain some inapproximability results. It remains a big open question that whether we can base total FPT inapproximability of {\sc $k$-SetCover} on {\sf W[2]$\neq$FPT}. 
If we can construct threshold graphs with a gap such that each right super-nodes consist of only $O(1)$ vertices, we can obtain {\sf W[2]} hardness of {\sc $k$-SetCover}, respectively. Note that the construction in Section \ref{TGC-KN21} does not suffice, because the size of their right super-nodes is $\Sigma^k$, which is too large even if $|\Sigma|=O(1)$.

~\\~

\noindent{\textbf{Acknowledgements}}

~

\noindent{
I want to express my deep gratitude to Prof. Bingkai Lin, who brought me into the beautiful world of hardness of approximation, discussed with me regularly and guided me patiently. I would also like to thank my talented friends Yican Sun and Xiuhan Wang for their bright ideas and unreserved help. I really enjoy working with them.
}

\bibliographystyle{alpha}
\bibliography{ref}

\end{document}